\documentclass[
  submission, % change to adraft or preliminary
  copyright % xor publicdomain
]{eptcs}
\providecommand{\event}{QPL 2026}

\usepackage{iftex}

\ifpdf
  \usepackage{underscore}         % Only needed if you use pdflatex.
  \usepackage[T1]{fontenc}        % Recommended with pdflatex
\else
  \usepackage{breakurl}           % Not needed if you use pdflatex only.
\fi

\usepackage[english]{babel} %dealing with underscore package issues

\usepackage{todonotes}[disable] % disable for submission
\usepackage{amsthm}
\usepackage{enumitem}
\usepackage{multirow} % rowspan command: \multirow
\usepackage{subcaption}
\usepackage{amsmath}
\usepackage{thmtools} % apparently needed for cleverref to work with amsthm, see https://tex.stackexchange.com/a/732209
\usepackage{amssymb}
\usepackage[]{tikzit} % refers to tikzit-quantum/tikzit.sty
\tikzstyle{gate}=[shape=rectangle, text height=1.5ex, text depth=0.25ex, yshift=-0.5mm, fill=white, draw=black, minimum height=5mm, minimum width=5mm, font={\small}, tikzit category=circuit]
\tikzstyle{big gate}=[shape=rectangle, text height=1.5ex, text depth=0.25ex, yshift=-0.5mm, fill=white, draw=black, minimum height=10mm, minimum width=5mm, font={\small}, tikzit category=circuit]
\tikzstyle{Z dot}=[inner sep=0mm, minimum size=2mm, shape=circle, draw=black, fill={rgb,255: red,221; green,255; blue,221}, tikzit category=zx]
\tikzstyle{Z phase dot}=[minimum size=5mm, font={\footnotesize\boldmath}, shape=rectangle, rounded corners=2mm, inner sep=0.2mm, outer sep=-2mm, scale=0.8, tikzit shape=circle, draw=black, fill={rgb,255: red,221; green,255; blue,221}, tikzit draw=blue, tikzit category=zx]
\tikzstyle{X dot}=[Z dot, shape=circle, draw=black, fill={rgb,255: red,255; green,136; blue,136}, tikzit category=zx]
\tikzstyle{X phase dot}=[Z phase dot, tikzit shape=circle, draw=black, tikzit draw=blue, fill={rgb,255: red,255; green,136; blue,136}, font={\footnotesize\boldmath}, tikzit category=zx]
\tikzstyle{hadamard}=[fill=yellow, draw=black, shape=rectangle, inner sep=0.6mm, minimum height=1.5mm, minimum width=1.5mm, tikzit category=zx]
\tikzstyle{paulibox}=[fill={rgb,255: red,221; green,221; blue,255}, draw=black, shape=rectangle, inner sep=0.6mm, minimum height=5mm, minimum width=5mm, font={\footnotesize}, text height=1.5ex, text depth=0.25ex, tikzit category=zx]
\tikzstyle{vertex}=[inner sep=0mm, minimum size=1mm, shape=circle, draw=black, fill=black, tikzit category=misc]
\tikzstyle{vertex set}=[inner sep=0mm, minimum size=1mm, shape=circle, draw=black, fill=white, font={\footnotesize\boldmath}, tikzit category=misc]
\tikzstyle{small black dot}=[fill=black, draw=black, shape=circle, inner sep=0pt, minimum width=1.2mm, tikzit category=circuit]
\tikzstyle{cnot ctrl}=[fill=black, draw=black, shape=circle, inner sep=0pt, minimum width=1.2mm, tikzit category=circuit]
\tikzstyle{cnot targ}=[fill=white, draw=white, shape=circle, tikzit category=circuit, label={center:$\oplus$}, inner sep=0pt, minimum width=2.1mm, tikzit fill={rgb,255: red,102; green,204; blue,255}, tikzit draw=black]
\tikzstyle{ket}=[fill=white, draw=black, shape=regular polygon, regular polygon sides=3, regular polygon rotate=-30, scale=0.7, inner sep=1pt, tikzit category=circuit, tikzit shape=rectangle, tikzit fill=green]
\tikzstyle{bra}=[fill=white, draw=black, shape=regular polygon, regular polygon sides=3, regular polygon rotate=30, scale=0.7, inner sep=1pt, tikzit category=circuit, tikzit shape=rectangle, tikzit fill=red]
\tikzstyle{scalar}=[shape=rectangle, text height=1.5ex, text depth=0.25ex, yshift=-0.5mm, fill=white, draw=black, minimum height=5mm, minimum width=5mm, font={\small}]
\tikzstyle{clabel}=[fill=white, draw=none, shape=rectangle, tikzit fill={rgb,255: red,56; green,255; blue,242}, font={\footnotesize}, inner sep=1pt, tikzit category=labels]
\tikzstyle{empty diagram}=[draw={gray!40!white}, dashed, shape=rectangle, minimum width=1cm, minimum height=1cm, tikzit category=misc]
\tikzstyle{amap}=[fill=white, draw=black, shape=NEbox, tikzit category=asymmetric, tikzit fill=yellow, tikzit shape=rectangle]
\tikzstyle{amap conj}=[fill=white, draw=black, shape=NWbox, tikzit category=asymmetric, tikzit fill=green, tikzit shape=rectangle]
\tikzstyle{amap adj}=[fill=white, draw=black, shape=SEbox, tikzit category=asymmetric, tikzit fill=red, tikzit shape=rectangle]
\tikzstyle{amap trans}=[fill=white, draw=black, shape=SWbox, tikzit category=asymmetric, tikzit fill=orange, tikzit shape=rectangle]
\tikzstyle{astate}=[fill=white, draw=black, shape=NEtriangle, tikzit category=asymmetric, tikzit shape=circle, tikzit fill=yellow]
\tikzstyle{astate conj}=[fill=white, draw=black, shape=NWtriangle, tikzit category=asymmetric, tikzit shape=circle, tikzit fill=green]
\tikzstyle{astate adj}=[fill=white, draw=black, shape=SEtriangle, tikzit category=asymmetric, tikzit shape=circle, tikzit fill=red]
\tikzstyle{astate trans}=[fill=white, draw=black, shape=SWtriangle, tikzit category=asymmetric, tikzit shape=circle, tikzit fill=orange]
\tikzstyle{tri}=[fill={rgb,255: red,255; green,136; blue,136}, tikzit fill=red, draw=black, shape=regular polygon, regular polygon sides=3, regular polygon rotate=30, scale=0.7, inner sep=0.2pt, minimum size=7mm, tikzit shape=rectangle]
\tikzstyle{tri left}=[fill={rgb,255: red,255; green,136; blue,136}, tikzit fill=red, draw=black, shape=regular polygon, regular polygon sides=3, regular polygon rotate=210, scale=0.7, inner sep=0.2pt, minimum size=7mm, tikzit shape=rectangle]
\tikzstyle{hadamard edge}=[-, dashed, dash pattern=on 2pt off 0.5pt, thick, draw={rgb,255: red,68; green,136; blue,255}]
\tikzstyle{box edge}=[-, dash pattern=on 2pt off 0.5pt, thick, draw={rgb,255: red,203; green,192; blue,225}, dashed]
\tikzstyle{brace edge}=[-, tikzit draw=blue, decorate, decoration={brace,amplitude=1mm,raise=-1mm}]
\tikzstyle{diredge}=[->]
\tikzstyle{double edge}=[-, double, shorten <=-1mm, shorten >=-1mm, double distance=2pt]
\tikzstyle{gray edge}=[-, {gray!60!white}]
\tikzstyle{pointer edge}=[->, very thick, gray]
\tikzstyle{boldedge}=[-, line width=1.6pt, shorten <=-0.17mm, shorten >=-0.17mm]
\tikzstyle{bidir edge}=[<->, very thick, draw={rgb,255: red,191; green,191; blue,191}]

\input{tikzit-quantum/quantum.tikzdefs}
\tikzstyle{Z dot}=[inner sep=0mm, minimum size=2mm, shape=circle, draw=black, fill=zxGreen, tikzit fill={rgb,255: red,216; green,248; blue,216}, outer sep=-0.5mm]
\tikzstyle{Z phase dot}=[draw=black, fill=zxGreen, shape=rectangle, minimum size=4.5mm, rounded corners=1.8mm, inner sep=0.5mm, outer sep=-0.5mm, scale=0.8, tikzit shape=circle, tikzit fill={rgb,255: red,216; green,248; blue,216}, font={\footnotesize\boldmath}]
\tikzstyle{X dot}=[shape=circle, draw=black, fill=zxRed, tikzit fill={rgb,255: red,221; green,165; blue,165}, inner sep=0 mm, minimum size=2 mm, outer sep=-0.5mm]
\tikzstyle{X phase dot}=[Z phase dot, draw=black, fill=zxRed, tikzit fill={rgb,255: red,221; green,165; blue,165}]
\tikzstyle{H box}=[fill=zxHad, draw=black, shape=rectangle, inner sep=0.6mm, minimum height=1.5mm, minimum width=1.5mm, tikzit fill=yellow, font={\footnotesize\boldmath}]
\tikzstyle{box}=[draw=black, shape=rectangle, fill=white, minimum size=1em, inner sep=0.2em, scale=0.85, font={\scriptsize}, outer sep=-0.5mm]
\tikzstyle{black dot}=[fill=black, draw=black, shape=circle, inner sep=1pt]
\tikzstyle{sLabel}=[font={\scriptsize}, tikzit draw=black, auto]
\tikzstyle{not}=[draw=black, circle, addcross, minimum size=2mm, outer sep=-0.5mm, inner sep=0mm]
\tikzstyle{Z dot thick}=[inner sep=0mm, minimum size=2mm, shape=circle, draw=black, fill=zxGreen, tikzit fill={rgb,255: red,216; green,248; blue,216}, outer sep=-0.5mm, line width=1pt]
\tikzstyle{X dot thick}=[inner sep=0mm, minimum size=2mm, shape=circle, draw=black, fill=zxRed, tikzit fill={rgb,255: red,221; green,165; blue,165}, outer sep=-0.5mm, line width=1pt]
\tikzstyle{fault-location}=[fill=white, draw=black, shape=circle, minimum size=2mm, inner sep=0mm, outer sep=-0.5 mm, regular polygon, regular polygon sides=8]
\tikzstyle{fault-location-faulty}=[fill=white, draw={rgb,255: red,191; green,0; blue,64}, shape=circle, minimum size=2mm, inner sep=0mm, outer sep=-0.5 mm, regular polygon, regular polygon sides=8, minimum size=3mm]
\tikzstyle{new style 0}=[fill=white, draw=black, shape=circle]

\tikzstyle{dashed-line}=[-, style=dashed, draw={rgb,255: red,128; green,128; blue,128}]
\tikzstyle{dotted-line}=[-, style=dotted, line width=0.5mm, line cap=round, dash pattern={on 0pt off 3\pgflinewidth}, draw=black]
\tikzstyle{classical}=[-, double, draw=black]
\tikzstyle{hadamard}=[-, style=dashed, draw=blue]
\tikzstyle{X Web}=[-, preaction={line width=1mm, draw=zxDarkRed}, tikzit draw=red]
\tikzstyle{z detector}=[-, zsideline, tikzit draw=green]
\tikzstyle{x detector}=[-, xsideline, tikzit draw=red]
\tikzstyle{II B II}=[-, boldwire, tikzit draw=gray]
\tikzstyle{IZ W II}=[->, ZL1, wire, tikzit draw=green]
\tikzstyle{IZ T II}=[->, ZL1, nowire, tikzit draw=green]
\tikzstyle{IZd T II}=[->, ZL1-dotted, nowire, tikzit draw=green]
\tikzstyle{ZI W II}=[->, ZL2, wire, tikzit draw=green]
\tikzstyle{ZI T II}=[->, ZL2, nowire, tikzit draw=green]
\tikzstyle{IX W II}=[->, XL1, wire, tikzit draw=red]
\tikzstyle{IX T II}=[->, XL1, nowire, tikzit draw=red]
\tikzstyle{IZ W ZI}=[->, ZL1, wire, ZR1, tikzit draw=green]
\tikzstyle{IZ B ZI}=[->, ZL1, boldwire, ZR1, tikzit draw=green]
\tikzstyle{IX W XI}=[->, XL1, wire, XR1, tikzit draw=red]
\tikzstyle{IX T XI}=[->, XL1, nowire, XR1, tikzit draw=red]
\tikzstyle{IZ W XI}=[->, ZL1, wire, XR1, tikzit draw=yellow]
\tikzstyle{ZZ T II}=[->, ZL2, ZL1, nowire, tikzit draw=green]
\tikzstyle{XX T II}=[->, XL2, XL1, nowire, tikzit draw=red]
\tikzstyle{ZZ W II}=[->, ZL2, ZL1, wire, tikzit draw=green]
\tikzstyle{XX W II}=[->, XL2, XL1, wire, tikzit draw=red]
\tikzstyle{ZZ W ZI}=[->, ZL2, ZL1, wire, ZR1, tikzit draw=green]
\tikzstyle{ZZ W XX}=[->, ZL2, ZL1, wire, XR1, XR2, tikzit draw=green]
\tikzstyle{Z Web}=[-, preaction={line width=1.7mm, draw=zxDarkGreen}, tikzit draw=green]
\tikzstyle{bridge}=[-, line width=2.1mm, draw=white, tikzit draw=green]
\tikzstyle{bridge 0}=[-, line width=1mm, draw=white, tikzit draw=gray]
\tikzstyle{Z Web T}=[-, nowire, preaction={line width=1.7mm, draw=zxDarkGreen}, tikzit draw=green]
\tikzstyle{XZ Web}=[-, preaction={line width=1.8mm, draw=zxDarkGreen}, preaction={line width=1mm, draw=zxDarkRed}, tikzit draw=blue]
\tikzstyle{braceedge}=[-, decorate, decoration={brace, amplitude=2mm, raise=-1mm}]
\tikzstyle{arrow}=[->]
\tikzstyle{fault-free}=[-, draw={rgb,255: red,177; green,98; blue,255}, line width=1pt]
\tikzstyle{new edge style 0}=[-, fill={rgb,255: red,211; green,211; blue,211}, draw=none]
\tikzstyle{new edge style 1}=[->, draw={rgb,255: red,177; green,98; blue,255}, line width=1pt]
\tikzstyle{fault-accounting-component}=[-, style=dashed, draw={rgb,255: red,0; green,128; blue,128}]
\tikzstyle{new edge style 2}=[-, style=dashed, draw={rgb,255: red,0; green,128; blue,128}]
\tikzstyle{X surface}=[-, fill=zxRed, tikzit fill=red, draw=none, tikzit draw=red]
\tikzstyle{Y surface}=[-, fill=zxGreen, tikzit fill=green, draw=none, tikzit draw=green]
\input{mp.tikzdefs}
\usepackage[capitalise]{cleveref} % TODO: is this the right loading order for cleverref?

\newtheorem{theorem}{Theorem}[section]
\newtheorem{proposition}[theorem]{Proposition}

\newtheorem{corollary}[theorem]{Corollary}
\newtheorem{definition}[theorem]{Definition}

\newtheorem{example}[theorem]{Example}

\title{Preserving MWPM-Decodability in Fault-Equivalent Rewrites}
\author{Maximilian Schweikart
\institute{University of Oxford\\Oxford, UK}
\email{maximilian.schweikart@kit.edu}
\and
Linnea Grans-Samuelsson
\institute{University of Oxford\\Oxford, UK}
\email{linnea.grans-samuelsson@physics.ox.ac.uk}
\and
Aleks Kissinger
\institute{University of Oxford\\Oxford, UK}
\email{aleks.kissinger@cs.ox.ac.uk}
\and
Benjamin Rodatz
\institute{University of Oxford\\Oxford, UK}
\email{benjamin.rodatz@cs.ox.ac.uk}
}

\newcommand{\titlerunning}{Preserving MWPM-Decodability in Fault-Equivalent
    Rewrites}
\newcommand{\authorrunning}{M. Schweikart, L. Grans-Samuelsson, A. Kissinger \& B. Rodatz}

\hypersetup{
  bookmarksnumbered,
  pdftitle    = {\titlerunning},
  pdfauthor   = {\authorrunning},
  pdfsubject  = {QPL},               % TODO: Consider adding a more appropriate subject or description
}

\begin{document}
\maketitle

\begin{abstract}
  Decoding a quantum error correction code is generally $NP$-hard, but corrections must be applied at a high frequency to suppress noise successfully.
  Matchable codes, like the surface code, exhibit a special structure that makes it possible to efficiently, approximately solve the decoding problem through minimum-weight perfect matching (MWPM).
  However, this efficiency-enabling property can be lost when constructing
  implementations for fault-tolerant gadgets such as syndrome-extraction circuits or logical operations.

  In this work, we take a circuit-centric perspective to formalise how the decoding problem changes when applying ZX rewrites to a ZX~diagram with a given detector basis.
  We demonstrate a set of rewrites that preserve MWPM-decodability of circuits and show that these \emph{matchability-preserving rewrites} can be used to fault-tolerantly extract quantum circuits from phase-free ZX diagrams. 
  In particular, this allows us to build efficiently decodable, fault-tolerant syndrome-extraction circuits for matchable codes.
\end{abstract}

\section{Introduction}
\label{sec:intro}

A key challenge in scalable and reliable quantum computing is the noise incurred by the computer's interaction with the environment \cite{quantum-computing-in-the-nisq-era-and-beyond}. 
Fault-tolerant quantum computing (FTQC) aims to overcome this challenge by encoding computation with redundancy such that errors can be detected and corrected \cite{gottesman-introduction-to-qec-and-ftqc}. 
This is achieved by performing measurements whose outcomes, in the noise-free case, are predetermined. 
Errors that flip measurement outcomes in such a way that our assumptions are violated are then detected by the circuit \cite{Derks2025designingfault}.
A key challenge in fault-tolerant quantum computing is taking these violated measurement outcomes and determining which error is the most likely to have caused it --- the decoding. 
In general, this problem is known to be $NP$-hard \cite{np-hardness-of-decoding-qec-codes}.
At the same time, decoding has to occur in real time during the quantum computation, on timescales that can be very short. As a consequence, an important problem is finding good algorithms that quickly approximate good solutions to the decoding problem \cite{pymatching-v1}. The feasibility of such approximation depends strongly on the decoding problem at hand.
As such, in FTQC there are two related problems: (1) finding circuits that are good at detecting errors and (2) finding efficient decoders for these circuits. 

Recent work \cite{fault-tolerance-by-construction,floquetifying-stabiliser-codes} has proposed using the ZX calculus for tackling the former problem of synthesising quantum circuits that are good at detecting errors. 
The authors propose starting with an idealised specification of the circuit one wants to implement and how it should behave under noise. 
While such specifications are relatively easy to find, they are not usually expressed in terms of implementable, noisy gates available on a quantum computer. 
The specification is then rewritten, using only rewrites that preserve the specification's behaviour under noise, until one gets a circuit that can be run on a quantum computer. 
By construction, as each step preserves the circuit's behaviour under noise, the resulting circuit is good at detecting errors. 
However, \cite{fault-tolerance-by-construction} solely focuses on designing circuits that are good at detecting errors without systematically considering the decoding problem.
Related work~\cite{poor2025ultralowoverheadsyndrome} studies the decoding of very specific circuits but not the general problem.

Circuit rewrites can change the number of linearly dependent measurement outcomes, i.e.\@ the detecting sets. 
As such, circuit rewrites can substantially change the underlying decoding problem. 
Therefore, if one starts with a specification that has an efficient decoder, the procedure proposed by \cite{fault-tolerance-by-construction} does not generally guarantee that the resulting implementation is still efficiently decodable. 
In this work, we consider the effect of rewrites on the decoding problem, defining a set of rewrites that not only preserve the specification's behaviour under noise but also its efficient decodability. 

There are two potential avenues for tackling this problem: (1) either one restricts oneself to rewrites that preserve the associated decoding problem or (2) one preserves sufficiently many properties about the decoding problem such that efficient decodability is preserved. 
\cite{fault-tolerant-transformations-of-spacetime-codes} take the former avenue, restricting the set of rewrites to preserve the decoding problem. 
While powerful, this is a very restrictive choice --- rewrites that change the number of detecting regions naturally change the decoding problem, as they change the number of syndrome bits one gets. 
As such, restricting oneself to decoding-preserving rewrites is very limiting, to the extent that many rewrites in \cite{fault-tolerance-by-construction} are no longer available, making their proposed derivations impossible. 

In this work, we instead propose preserving certain desirable \emph{properties} about the decoding problem as a whole.
While this may enable substantially more rewrites, such rewrites are only applicable to contexts in which the relevant property of the decoding problem was present in the first place. 
Therefore, while this line of thinking enables more rewrites, it is only applicable in certain contexts. 
As such, it is essential to identify properties that satisfy three criteria.  
(1) They should be \emph{useful} for solving the decoding problem more efficiently. (2) They should be \emph{widely used} to make them applicable in many contexts and (3) they should be sufficiently \emph{easily preserved} to admit new and expressive rewrites.

Finding properties that satisfy all three of these criteria is difficult. 
In this work, we propose considering \emph{matchability}. 
Matchability requires that each edge in the ZX diagram is covered by at most two detecting regions \cite{pymatching-v1}. 
This is a well-studied property that admits efficient and good decoders by leveraging fast minimum-weight perfect matching algorithms \cite{pymatching-v1,pymatching-v2-sparse-blossom,fusion-blossom,micro-blossom}.
It is also a relatively widely-used property, as the surface code satisfies matchability. 
The surface code is one of the key quantum error correction codes studied by the likes of Google, in their recent, state-of-the-art experiments \cite{acharyaQuantumErrorCorrection2025}.
After defining \emph{matchability-preserving rewrites}, we will show that matchability is also relatively easily preservable, providing a set of rewrites to decompose arbitrary weight spiders. 
We will use these to show that any ZX diagram can be fault-equivalently synthesised into a quantum circuit while preserving matchability.

\section{Preliminaries}
\label{sec:bg}

The ZX calculus \cite{zx-calc-original,vandeweteringZXcalculusWorkingQuantum2020} is a diagrammatic reasoning tool for quantum programs.
Similar to the quantum circuit model, it gives a visual representation of the
underlying linear operations.
ZX diagrams are made up of only two basic building blocks:
\begin{definition}[Spider]
    Spiders are green (a.k.a.\ Z) or red (a.k.a.\ X) nodes with an angle
    $\alpha \in [0, \frac{\pi}{2}, \pi, \frac{3\pi}{2})$, a number of input legs $m \in \mathbb N_0$ and a
    number of output legs $n \in \mathbb N_0$.
    \begin{equation}
        \begin{gathered}
            \input{./figures/background/z-spider.tikz}\\
            := \ket 0^{\otimes n} \bra 0^{\otimes m} + e^{i\alpha} \ket 1^{\otimes n} \bra 1^{\otimes m}
        \end{gathered}
        \hspace{48pt}
        \begin{gathered}
            \input{./figures/background/x-spider.tikz}\\
            := \ket +^{\otimes n} \bra +^{\otimes m} + e^{i\alpha} \ket -^{\otimes n} \bra -^{\otimes m}
        \end{gathered}
    \end{equation}
\end{definition}
We can compose ZX diagrams sequentially, corresponding to matrix multiplication and in parallel, corresponding to the tensor product. 
Many common building blocks in the quantum circuit model can be compactly
represented as ZX diagrams (see \cref{fig:zx-representations}).
We remark that, by restricting the phases of spiders to be multiples of $\frac{\pi}{2}$ we restrict ourselves to Clifford ZX diagrams, which are universal for stabiliser states \cite{zx-calc-is-complete-for-stabilizer-quantum-mechanics}. 
For more details on the entire ZX calculus, see \cite{vandeweteringZXcalculusWorkingQuantum2020,picturing-quantum-software}.

\begin{figure}[]
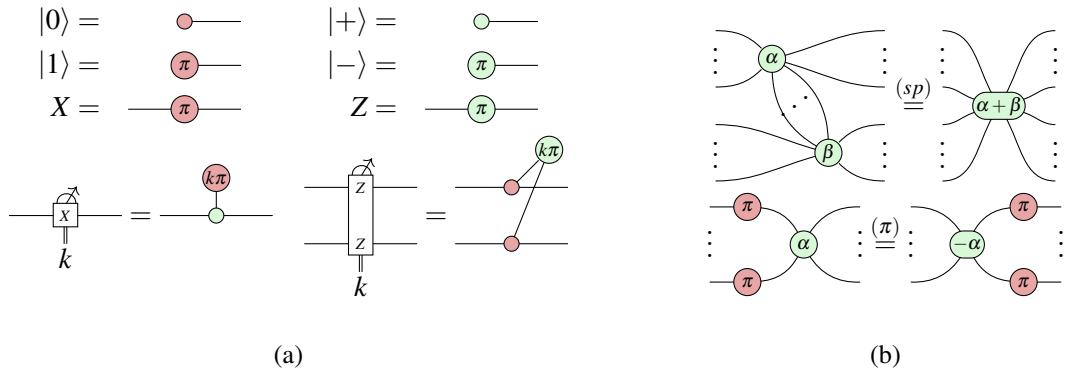

    \centering
    \begin{subfigure}[b]{0.49\textwidth}
        \centering
        \begin{equation*}
            \begin{gathered}
                \begin{aligned}
                    \ket0 &=& \input{./figures/background/zero-spider.tikz}
                        &\qquad& \ket+ &=& \input{./figures/background/plus-spider.tikz}\\
                    \ket1 &=& \input{./figures/background/one-spider.tikz}
                        &\qquad& \ket- &=& \input{./figures/background/minus-spider.tikz}\\
                    X &=& \input{./figures/background/x-flip.tikz}
                        &\qquad& Z &=& \input{./figures/background/z-flip.tikz}
                \end{aligned}
            %         \tikzfig{background/cnot-circuit}
            %         =
            %         \tikzfig{background/cnot-diagram}
            %         \hspace{24pt}
            %         \tikzfig{background/z-meas-circuit}
            %         =
            %         \tikzfig{background/z-meas-diagram}
            %         \hspace{24pt}
                \\
                \input{./figures/background/x-meas-circuit.tikz}
                =
                \input{./figures/background/x-meas-diagram.tikz}
                \quad
                \input{./figures/background/zz-meas-circuit.tikz}
                =
                \input{./figures/background/zz-meas-diagram.tikz}
            \end{gathered}
        \end{equation*}
        \caption{}
        \label{fig:zx-representations}
    \end{subfigure}
    \begin{subfigure}[b]{0.49\textwidth}
        \centering
        \begin{equation*}
            \begin{gathered}
                \input{./figures/background/sp-rule-LHS.tikz} \stackrel{(sp)}= \input{./figures/background/sp-rule-RHS.tikz}\\
                \input{./figures/background/pi-rule-LHS.tikz} \stackrel{(\pi)}= \input{./figures/background/pi-rule-RHS.tikz}
            \end{gathered}
        \end{equation*}
        \caption{}
        \label{fig:zx-rewrites}
    \end{subfigure}
    \caption{This figure demonstrates two core components in the ZX calculus.
    (a) shows how a few common building blocks from the quantum circuit model
    can be represented with few spiders in the ZX calculus.
    (b) summarises some important ZX rewriting rules.}
    \label{fig:bg:zx-overview}
\end{figure}

The key advantage of the ZX calculus comes from its powerful rewriting system.
The calculus allows a number of basic modifications to diagrams that keep the
underlying linear map intact.
\begin{definition}[ZX Rewrite]
    If $D_1$, $D_2$ are equivalent ZX diagrams, we call $D_1 = D_2$ a ZX
    rewrite.
\end{definition}
We can apply these rewrites to larger diagrams:
Replacing a sub-diagram with an equivalent one produces an equation between the original diagram and the one where the sub-diagram was replaced:
\begin{equation}
    \begin{tikzpicture}
	\begin{pgfonlayer}{nodelayer}
		\node [style=none] (0) at (4.5, 1) {};
		\node [style=none] (1) at (4.5, -1) {};
		\node [style=none] (2) at (5.5, -1) {};
		\node [style=none] (3) at (5.5, 1) {};
		\node [style=none] (4) at (4, 1.5) {};
		\node [style=none] (5) at (6, 1.5) {};
		\node [style=none] (6) at (7, 2.5) {};
		\node [style=none] (7) at (3, 2.5) {};
		\node [style=none] (8) at (4, -1.5) {};
		\node [style=none] (9) at (3, -2.5) {};
		\node [style=none] (10) at (6, -1.5) {};
		\node [style=none] (11) at (7, -2.5) {};
		\node [style=none] (12) at (4.5, 0.5) {};
		\node [style=none] (13) at (4.5, -0.5) {};
		\node [style=none] (14) at (5.5, 0.5) {};
		\node [style=none] (15) at (5.5, -0.5) {};
		\node [style=none] (16) at (6, 0.5) {};
		\node [style=none] (17) at (6, -0.5) {};
		\node [style=none] (18) at (4, 0.5) {};
		\node [style=none] (19) at (4, -0.5) {};
		\node [style=none] (20) at (7, 0.5) {};
		\node [style=none] (21) at (7, -0.5) {};
		\node [style=none] (22) at (8, 0.5) {};
		\node [style=none] (23) at (8, -0.5) {};
		\node [style=none] (24) at (2, 0.5) {};
		\node [style=none] (25) at (2, -0.5) {};
		\node [style=none] (26) at (3, 0.5) {};
		\node [style=none] (27) at (3, -0.5) {};
		\node [style=none] (28) at (5, 2) {$C$};
		\node [style=none] (29) at (5, 0) {};
		\node [style=none] (30) at (5, 0) {$D$};
		\node [style=none] (31) at (-9, 1) {};
		\node [style=none] (32) at (-9, -1) {};
		\node [style=none] (33) at (-8, -1) {};
		\node [style=none] (34) at (-8, 1) {};
		\node [style=none] (35) at (-9, 0.5) {};
		\node [style=none] (36) at (-9, -0.5) {};
		\node [style=none] (37) at (-8, 0.5) {};
		\node [style=none] (38) at (-8, -0.5) {};
		\node [style=none] (39) at (-7, 0.5) {};
		\node [style=none] (40) at (-7, -0.5) {};
		\node [style=none] (41) at (-10, 0.5) {};
		\node [style=none] (42) at (-10, -0.5) {};
		\node [style=none] (43) at (-8.5, 0) {$D$};
		\node [style=none] (44) at (-4, 1) {};
		\node [style=none] (45) at (-4, -1) {};
		\node [style=none] (46) at (-3, -1) {};
		\node [style=none] (47) at (-3, 1) {};
		\node [style=none] (48) at (-4, 0.5) {};
		\node [style=none] (49) at (-4, -0.5) {};
		\node [style=none] (50) at (-3, 0.5) {};
		\node [style=none] (51) at (-3, -0.5) {};
		\node [style=none] (52) at (-2, 0.5) {};
		\node [style=none] (53) at (-2, -0.5) {};
		\node [style=none] (54) at (-5, 0.5) {};
		\node [style=none] (55) at (-5, -0.5) {};
		\node [style=none] (56) at (-3.5, 0) {$E$};
		\node [style=none] (57) at (-6, 0) {=};
		\node [style=none] (58) at (0, 0) {$\Longrightarrow$};
		\node [style=none] (59) at (12.5, 1) {};
		\node [style=none] (60) at (12.5, -1) {};
		\node [style=none] (61) at (13.5, -1) {};
		\node [style=none] (62) at (13.5, 1) {};
		\node [style=none] (63) at (12, 1.5) {};
		\node [style=none] (64) at (14, 1.5) {};
		\node [style=none] (65) at (15, 2.5) {};
		\node [style=none] (66) at (11, 2.5) {};
		\node [style=none] (67) at (12, -1.5) {};
		\node [style=none] (68) at (11, -2.5) {};
		\node [style=none] (69) at (14, -1.5) {};
		\node [style=none] (70) at (15, -2.5) {};
		\node [style=none] (71) at (12.5, 0.5) {};
		\node [style=none] (72) at (12.5, -0.5) {};
		\node [style=none] (73) at (13.5, 0.5) {};
		\node [style=none] (74) at (13.5, -0.5) {};
		\node [style=none] (75) at (14, 0.5) {};
		\node [style=none] (76) at (14, -0.5) {};
		\node [style=none] (77) at (12, 0.5) {};
		\node [style=none] (78) at (12, -0.5) {};
		\node [style=none] (79) at (15, 0.5) {};
		\node [style=none] (80) at (15, -0.5) {};
		\node [style=none] (81) at (16, 0.5) {};
		\node [style=none] (82) at (16, -0.5) {};
		\node [style=none] (83) at (10, 0.5) {};
		\node [style=none] (84) at (10, -0.5) {};
		\node [style=none] (85) at (11, 0.5) {};
		\node [style=none] (86) at (11, -0.5) {};
		\node [style=none] (87) at (13, 2) {$C$};
		\node [style=none] (88) at (13, 0) {};
		\node [style=none] (89) at (13, 0) {$E$};
		\node [style=none] (90) at (9, 0) {=};
	\end{pgfonlayer}
	\begin{pgfonlayer}{edgelayer}
		\draw (8.center)
			 to (10.center)
			 to (5.center)
			 to (4.center)
			 to cycle;
		\draw (9.center)
			 to (7.center)
			 to (6.center)
			 to (11.center)
			 to cycle;
		\draw (0.center) to (3.center);
		\draw (3.center) to (2.center);
		\draw (2.center) to (1.center);
		\draw (1.center) to (0.center);
		\draw (18.center) to (12.center);
		\draw (19.center) to (13.center);
		\draw (14.center) to (16.center);
		\draw (15.center) to (17.center);
		\draw (20.center) to (22.center);
		\draw (21.center) to (23.center);
		\draw (24.center) to (26.center);
		\draw (25.center) to (27.center);
		\draw (31.center) to (34.center);
		\draw (34.center) to (33.center);
		\draw (33.center) to (32.center);
		\draw (32.center) to (31.center);
		\draw (41.center) to (35.center);
		\draw (42.center) to (36.center);
		\draw (37.center) to (39.center);
		\draw (38.center) to (40.center);
		\draw (44.center) to (47.center);
		\draw (47.center) to (46.center);
		\draw (46.center) to (45.center);
		\draw (45.center) to (44.center);
		\draw (54.center) to (48.center);
		\draw (55.center) to (49.center);
		\draw (50.center) to (52.center);
		\draw (51.center) to (53.center);
		\draw (67.center)
			 to (69.center)
			 to (64.center)
			 to (63.center)
			 to cycle;
		\draw (68.center)
			 to (66.center)
			 to (65.center)
			 to (70.center)
			 to cycle;
		\draw (59.center) to (62.center);
		\draw (62.center) to (61.center);
		\draw (61.center) to (60.center);
		\draw (60.center) to (59.center);
		\draw (77.center) to (71.center);
		\draw (78.center) to (72.center);
		\draw (73.center) to (75.center);
		\draw (74.center) to (76.center);
		\draw (79.center) to (81.center);
		\draw (80.center) to (82.center);
		\draw (83.center) to (85.center);
		\draw (84.center) to (86.center);
	\end{pgfonlayer}
\end{tikzpicture}

\end{equation}
In what follows, we will use the following notation: $C(D)$ denotes a diagram with a subdiagram $D$, and the diagram $C(D)$ with the subdiagram $D$ \emph{removed} is denoted $C$. 

A fundamental rewriting rule is Only Connectivity Matters (OCM), meaning that we can deform ZX diagrams arbitrarily as long as the spiders and their connectivity remain the same, much like graph isomorphisms.
Furthermore, there is a complete set of rewrites, meaning that any two ZX
diagrams that represent the same quantum program can be rewritten into each
other through a finite sequence of rewrite rule applications.
We present a few ZX rewrites that we will frequently use in this work in
\cref{fig:zx-rewrites}.

More recently, the ZX calculus has found wide application in the domain of quantum error correction (QEC) \cite{unifying-flavors-of-ft-with-the-zx-calc, fault-tolerance-by-construction,floquetifying-the-colour-code, scalable-spider-nests, Gidney2023}. 
A key observation when reasoning about noisy circuits is that two circuits, even if they implement the same linear map, may behave very differently under noise, meaning that faults on one circuit may propagate substantially worse than faults on another circuit. 
Therefore, \cite{fault-tolerance-by-construction} propose restricting the set of allowed rewrites to the ones that not only preserve the underlying computation but also the circuit's behaviour under noise, calling such rewrites \emph{fault-equivalent}.
When manipulating circuits in the noisy setting, it is sensible to restrict oneself to such rewrites. 
We refer to \cite{fault-tolerance-by-construction} for more details on fault equivalence. 

Another useful tool in the context of QEC are \emph{Pauli webs} \cite{unifying-flavors-of-ft-with-the-zx-calc}.
Pauli webs are a decoration of ZX diagrams, where wires are highlighted in green and/or red to encode stabilisers of the underlying operation.
\begin{definition}[Pauli web]
    For a ZX diagram $D$, a Pauli web $P : \rm{edges}(D) \to \{I, Z, X, ZX\}$, assigns colours to edges.
    For every spider with a phase $\alpha \in \left\{ 0, \pi \right\}$ in the
    diagram, $P$ covers an even number of the spider's legs in the colour of the
    spider and either all or none of the spider's legs in the opposite colour
    \footnote{While Pauli webs can be defined for the full Clifford fragment, we
    will only use them for phaseless diagrams in this work.}.
\end{definition}
A special kind of Pauli webs are those that are non-trivial but colour all
boundary edges (those adjacent to less than two spiders) trivially. 
We call these Pauli webs \emph{detecting regions} since they characterise the
parity checks of QEC spacetime codes \cite{delfosseSpacetimeCodes2023}.
Detecting regions can detect errors they anticommute with, meaning that the corresponding detecting set will give a non-trivial syndrome \cite{floquetifying-the-colour-code,floquetifying-stabiliser-codes}.

Pauli webs can be helpful for analysing \emph{matchability}, the property that
enables Minimum Weight Perfect Matching (MWPM) decoding.
MWPM decoding (and therefore matchability) is highly relevant to the practical
application of Quantum Error Correction because (1) it is scalable with a
polynomial time complexity and efficient implementations available
\cite{pymatching-v1,pymatching-v2-sparse-blossom,fusion-blossom,micro-blossom}
and because (2) it is highly relevant as the surface code, one of the most
widely studied QEC codes, is matchable \cite{surface-code}.
From a technical perspective, \emph{matchability} requires each atomic fault to violate at most two detectors within some basis of all detectors \cite{pymatching-v1}.
Under the edge flip noise model \cite{floquetifying-stabiliser-codes}, we can
characterise matchability with the following diagrammatic definition.
\begin{definition}[Matchability]
    \label{def:matchability}
    Let $\mathcal S$ be a basis of detecting Pauli webs for a ZX diagram $D$.
    If for every edge $e \in \rm{edges}(D)$, there are at most two Pauli webs
    $P \in \mathcal S$ covering $e$ nontrivially ($P(e) \neq I$) then we call
    $\mathcal S$ a matchable detector basis and $D$ matchable with $\mathcal S$.
    For a ZX diagram where all detecting regions are single-coloured, $\mathcal S$ is \emph{CSS matchable} if every edge is covered by at most two Pauli webs in each colour.
\end{definition}
CSS matchable Pauli webs make use of the fact that when all detecting regions are single-coloured, we can split the decoding problem into two separate problems; one tackling $X$ errors and the other tackling $Z$ errors. 
\section{Matchability preservation}
\label{sec:mp}

ZX rewrites allow us to modify ZX diagrams without changing the underlying linear map. 
Restricting to fault-equivalent rewrites further ensures that the diagram's behaviour under noise is preserved. 
However, such rewrites can change the detecting regions of the diagram and may therefore change the associated decoding problem. 

We consider rewrites that specify not only their action on the diagram, but also their effect on the decoding problem. 
We call these \emph{detector-aware rewrites}. 
They extend standard ZX rewrites by tracking additional data about changes to the detecting regions.
So, a standard ZX rewrite can be part of many different detector-aware rewrites. 
We then define \emph{matchability-preserving rewrites}, which preserve the property that the underlying decoding problem is matchable. 
In the following sections, we will then show how any phase-free ZX diagram can be fault-equivalently extracted into a circuit while preserving matchability.

\subsection{Detector-aware rewrites}
\label{sec:mp:da-rewrites}
To define detector-aware rewrites, we first have to isolate the additional data that we need to track. 
In particular, we are interested in how rewrites change the detecting regions of a diagram. 
Therefore, we have to consider the restriction of Pauli webs to the subdiagram we want to rewrite.

We define:
\begin{definition}[Stabilising and detecting webs, local detector basis]
 Let $\mathcal{S}$ be a multiset of Pauli webs on $D$. 
    $\mathcal{S}|_S$ indicates the subset of $\mathcal{S}$ that are stabilisers of $D$, i.e.\@ have non-trivial action on the boundary, and $\mathcal{S}|_D$ indicates the subset of detecting regions, i.e.\@ have trivial action on the boundary.
 We say a Pauli web multiset $\mathcal{S}$ of a ZX diagram $D$ forms a \emph{local detector basis} if $\mathcal S|_D$ forms a basis for the detecting regions of $D$.
\end{definition}

Next, we define:
\begin{definition}[Local restriction]
 Let $C$ be a ZX diagram and let $D$ be a sub-diagram of $C$.
 For a Pauli web $P$ on $C$, we define the \emph{local restriction} of $P$ to
    $D$, written as $P[D]$, to be the action of $P$ on the edges of $D$.
 We say a local restriction is \emph{trivial} if it acts as identity on all edges in $D$.
 For a multiset $\mathcal S$ of Pauli webs on $C$, the local restriction $\mathcal S[D]$ of $\mathcal S$ to $D$ is the multiset of all non-trivial, local restrictions of Pauli webs from $\mathcal S$ to $D$.
\end{definition}
Note that we use multisets of Pauli webs here rather than regular sets.
This is because different Pauli webs on the full diagram can have the same local
restriction on some sub-diagrams:

\begin{example}
    Consider the following diagram $C(D)$ ($D$ is highlighted by the dashed box)
    with four detecting regions $\mathcal{S} = \{A, B, C, D\}$:
    \begin{equation}
        \input{./figures/matchability-preservation/da-rewrites/restriction-global.tikz}
        \quad\stackrel{\mathcal S[D]}\rightsquigarrow\quad
        \input{./figures/matchability-preservation/da-rewrites/restriction-local.tikz}
        \quad\stackrel{\mathcal S[D]|_S}\rightsquigarrow\quad
        \input{./figures/matchability-preservation/da-rewrites/restriction-local-stabilisers.tikz}
    \end{equation}
    As $B[D] = C[D]$, $\mathcal{S}[D] = \{A[D], 2 \times B[D], D[D]\}$ (using
    ``$2\times$'' to indicate the multiplicity of $B[D]$).
    Regions $A, B$ and $C$ become stabilising Pauli webs on $D$ with the same
    boundary action.
    Only $D$ remains a detecting region on $D$, so
    $\mathcal{S}[D]|_S = \{A[D], 2 \times B[D]^2\}$.
\end{example}

To specify how decoder-aware rewrites update multisets of Pauli webs, we first have to define when they are compatible. 
For some multiset $\mathcal{S}$, we define the expanded set of $\mathcal{S}$ as $\mathcal{S}^* := \{(S, i) | S \in \mathcal{S}, i = 1, ..., \text{multiplicity of } S \text{ in } \mathcal{S}\}$. 
We say:
\begin{definition}[Boundary-respecting coupling]
 Let $D, E$ be equivalent ZX diagrams with boundary edges $B_D$ and
    $B_E$. 
 Let $\mathcal{S}_D$, $\mathcal{S}_E$ be Pauli web multisets on $D$ and $E$ respectively.
 A boundary coupling between $\mathcal{S}_D$, $\mathcal{S}_E$ is a bijection $R: (\mathcal{S}_D|_S)^* \to (\mathcal{S}_E|_S)^*$ between the Pauli webs with non-trivial action on the boundary. 
 We say $R$ is boundary-respecting if for all $S \in \mathcal{S}_D|_S$, we have $S[B_D] = R(S)[B_E]$.
\end{definition}

A boundary coupling maps Pauli webs with non-trivial action on the boundary to other Pauli webs with non-trivial action on the boundary. 
It is boundary-respecting if two webs that are mapped to each other have the same effect on the boundary. 

\begin{example}
    Returning to our example above, we have
    $\mathcal{S}[D]|_S = \{A[D], 2 \times B[D]\}$ and thus,
    $(\mathcal{S}|_S)^* = \{(A[D], 1), (B[D], 1), (B[D], 2)\}$.
    For example, we could consider the following rewrite:
    \begin{equation}
        \input{./figures/matchability-preservation/da-rewrites/same-boundary-coverage-LHS.tikz}
        \quad\rightsquigarrow\quad
        \input{./figures/matchability-preservation/da-rewrites/same-boundary-coverage-RHS.tikz}
    \end{equation}
    Here, the RHS has two Pauli webs going via the bottom edge and only one
    going via the top edge. 
    One boundary-respecting coupling would be:
    $((B[D], 1), (A', 1)), ((B[D], 2), (C', 1)), ((A[D], 1), (C', 2))$. 
\end{example}

Finally, we can define: 
\begin{definition}[Detector-aware rewrite]
 Let $D \FaultEq E$ be a fault-equivalent ZX rewrite. Then a detector-aware rewrite consists of 
    \begin{itemize}
        \item a local detector basis $\mathcal{S}_D$ for $D$
        \item a local detector basis $\mathcal{S}_E$ for $E$
        \item a boundary-respecting coupling $R: (\mathcal{S}_D|_S)^* \to (\mathcal{S}_E|_S)^*$
    \end{itemize}
 We use $(D, \mathcal{S}_D) \underset{R}{\circeq} (E, \mathcal{S}_E)$ to denote a detector-aware rewrite.
    % When the diagrams are clear from context, we write $S_1 \circeq S_2$. 
\end{definition}
Note that this definition requires fault equivalence.
While it is conceivable to think about detector-aware rewrites that are not fault-equivalent, in most reasonable contexts, when thinking about decoding, we also want to preserve the circuit's behaviour under noise.

Finally, we can define how to apply detector-aware rewrites: 
\begin{definition}[Applying detector-aware rewrites]
    \label{thm:applying-da-rewrites}
 Let $C(D)$ be some ZX diagram with a subdiagram $D$ and some Pauli web multiset $\mathcal{S}$. 
 Let $(D, \mathcal{S}[D]) \underset{R}{\circeq} (E, \mathcal{S}_E)$ be a detector-aware rewrite.
 Then applying $(D, \mathcal{S}[D]) \underset{R}{\circeq} (E, \mathcal{S}_E)$ to $C(D)$ gives a new diagram $C(E)$, along with a new Pauli web multiset $\mathcal{S}'$ consisting of: 
    \begin{itemize}
        \item For each detecting region $S \in \mathcal{S}_E|_D$, a web that acts trivially on $C$ and as $S$ on $E$.
        \item For each Pauli web $S \in \mathcal{S}$ that acts non-trivially on $C$ and trivially on the boundary of $D$, a web that acts as $S$ on $C$ and trivially on $E$.
        \item For each Pauli web $S \in \mathcal{S}$ that acts non-trivially on the boundary of $D$, a web that acts as $S$ on $C$ and as $R(S[D])$ on $E$.
    \end{itemize}
\end{definition}
So, $\mathcal{S}$ is updated by replacing the local detecting regions in $D$ with those in $E$ (i.e.\@ the webs that act trivially on $C$) and then updating the remaining webs according to $R$. 
\begin{example}
    Returning to our example above, applying the detector-aware rewrite
    specified above, we get:
    \begin{equation}
        \input{./figures/matchability-preservation/da-rewrites/rewrite-LHS.tikz}
        \quad\underset R \circeq\quad
        \input{./figures/matchability-preservation/da-rewrites/rewrite-RHS.tikz}
    \end{equation}
\end{example}

In \cref{appendix:properties-of-da-rewrites}, we show that detector-aware rewrites can be applied transitively, meaning that we can chain a sequence of detector-aware rewrites and still get a new detector-aware rewrite. 
Furthermore, we show that they are compositional, meaning that if we have a detector-aware rewrite from $D$ to $E$, we can naturally define a corresponding rewrite from $C(D)$ to $C(E)$. 
Finally, we show that if we have a detector basis on some diagram, detector-aware rewrites give a new set of Pauli webs that form a detector basis. 
In other words, we show that detector-aware rewrites behave as expected of local rewrites. 

\subsection{Matchability-preserving rewrites}
\label{sec:mp:mp-rewrites}
We have defined detector-aware rewrites as ZX rewrites that track additional information about how the decoding problem should be updated under the rewrite. 
So far, we have put no restrictions on which updates to the decoding problem are considered valid. 
\cite{fault-tolerant-transformations-of-spacetime-codes} propose to restrict to rewrites that preserve the decoding problem. 
However, any rewrite that changes the number of detecting regions would naturally violate that rather restrictive constraint. 
Therefore, instead, we will propose an alternative property to preserve: \emph{matchability}.
We recall that matchability was defined such that every edge is covered by at most two detecting regions. 
Furthermore, CSS matchability requires that each edge is covered by at most two detecting regions of each colour (see \cref{def:matchability}). 

We define:
\begin{definition}[Matchability-preserving rewrite]
    \label{def:mp-rewrite}
 A detector-aware rewrite $(D, \mathcal{S}) \underset{R}{\circeq} (E, \mathcal{T})$ is called a \emph{matchability-preserving
 rewrite} if  $\mathcal{S}$ and $\mathcal{T}$ are matchable Pauli web multisets.
 We say that it is \emph{CSS matchability-preserving} if $\mathcal{S}$ and $\mathcal{T}$ are CSS matchable Pauli web multisets.
\end{definition}

We can see that matchability-preserving rewrites are independent of the choice
of a boundary-respecting coupling. 
If a rewrite $(D, \mathcal{S}) \underset{R}{\circeq} (E, \mathcal{T})$ is
matchability-preserving, then naturally, any other rewrite
$(D, \mathcal{S}) \underset{R'}{\circeq} (E, \mathcal{T})$ with a different
coupling choice is also matchability-preserving.
We will therefore often omit the choice of the coupling in the remainder of this
paper, just showing that at least one coupling exists. 

We can now state:
\begin{proposition}[Matchability-preserving rewrites preserve matchability]
    Let $C(D)$ be a ZX diagram with a subdiagram $D$ and let $\mathcal S$
    be a \emph{matchable} Pauli web multiset of $C(D)$. 
    Then applying a matchability-preserving rewrite
    $(D, \mathcal S[D]) \underset{R}{\circeq} (E, \mathcal T)$ to $C(D)$ gives
    $C(E)$ along with a new Pauli web multiset $\mathcal S'$ that is matchable. 
\end{proposition}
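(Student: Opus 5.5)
The plan is to argue edge by edge. The edges of $C(E)$ split into the edges of $C$ (including the boundary edges shared between $C$ and $E$) and the internal edges of $E$. For each edge $e$ we bound the number of webs in $\mathcal S'$ that cover $e$ nontrivially, using the three kinds of webs produced by \cref{thm:applying-da-rewrites}. The \emph{basis} part of matchability, i.e.\ that the detecting webs in $\mathcal S'$ again form a detector basis of $C(E)$, we take from the appendix result stated earlier: applying a detector-aware rewrite to a detector basis yields a detector basis. What remains is the counting condition of \cref{def:matchability}.

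\textbf{Edges of $C$, including edges on the boundary of $D$/$E$.} Every web in $\mathcal S'$ acts on $C$ exactly as some web of $\mathcal S$ acts on $C$.
\begin{itemize}
\item A web of the first kind acts trivially on $C$.
\item A web of the second kind copies its original $S$ on $C$.
\item A web of the third kind also copies its original $S$ on $C$.
\end{itemize}
This correspondence is injective: each $S\in\mathcal S$ gives rise to at most one web of the second or third kind. So the webs of $\mathcal S'$ covering $e$ are in bijection with a subset of the webs of $\mathcal S$ covering $e$ in $C(D)$, and matchability of $\mathcal S$ gives at most two. For a boundary edge of $E$ we must also check that the $E$-part agrees there. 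By boundary-respecting of $R$, $R(S[D])[B_E]=S[B_D]$, so the action on such edges is unchanged.

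\textbf{Internal edges of $E$.} The webs of $\mathcal S'$ acting nontrivially on $E$ are:
\begin{itemize}
\item the local detecting regions $\mathcal T|_D$;
\item the webs $R(S[D])$, one for each $S\in\mathcal S$ acting nontrivially on the boundary of $D$.
\end{itemize}
The key step is to identify this multiset with $\mathcal T$ itself. We need that the webs $S$ of the third kind correspond exactly to the elements of $(\mathcal S[D]|_S)^*$, counted with multiplicity. Webs of $\mathcal S$ that are trivial on $D$ are dropped from $\mathcal S[D]$, but they also act trivially on the boundary of $D$, so they are not of the third kind. Conversely, each $S$ acting nontrivially on $\partial D$ contributes one copy of $S[D]$ to $\mathcal S[D]|_S$. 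Since $R$ is a bijection onto $(\mathcal T|_S)^*$, the $E$-parts of the third-kind webs are exactly $\mathcal T|_S$ with multiplicity. Hence the $E$-restrictions of $\mathcal S'$ form precisely the multiset $\mathcal T|_D\cup\mathcal T|_S=\mathcal T$. Because $\mathcal T$ is matchable, which is the hypothesis of \cref{def:mp-rewrite}, each edge of $E$ is covered by at most two webs.

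\textbf{Main obstacle.} The delicate part is the multiset bookkeeping in this last identification. One must check that the expanded-set indexing of $R$ matches the multiplicities with which distinct global webs restrict to the same local web, as in the example with $B[D]=C[D]$. One must also make sure that no web is double-counted or dropped in the definition of $\mathcal S[D]$. If the detecting-basis part is not taken directly from the appendix, it would be the other nontrivial step. The CSS variant follows by the same argument run colour by colour.
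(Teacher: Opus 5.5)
Your proposal is correct and follows the paper's route. You split the edges of $C(E)$ into those of $C$, whose covering webs come injectively from $\mathcal S$, and those of $E$, whose covering webs form exactly the multiset $\mathcal T$, and then use the matchability of each. Your version is more careful than the paper's one-paragraph proof: it spells out the multiplicity bookkeeping for $R$, which the paper's definition of applying a rewrite leaves implicit, and it covers the detector-basis part via the appendix corollary, which the paper's proof does not mention.
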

\begin{proof}
    By \cref{thm:applying-da-rewrites}, we know that the only detecting
    regions that highlight edges in $C$ are detecting regions that are also
    $\mathcal S$.
    (Recall that $C$ denotes $C(D)$ with $D$ removed.)
    As we assumed that $\mathcal S$ is matchable, this means that each edge in
    $C$ must also satisfy the matchability criteria in $\mathcal S'$.
    Furthermore, edges can only be acted upon by Pauli webs specified in
    $\mathcal T$. 
    By definition, $\mathcal{T}$ is matchable and thus $\mathcal S'$ must be
    matchable on $C(E)$.
\end{proof}

We can now give some matchability-preserving rewrites: 
\begin{proposition}[Id-removal]
 The following rewrite is (CSS) matchability preserving:
    \begin{equation}
        \begin{tikzpicture}
	\begin{pgfonlayer}{nodelayer}
		\node [style=none] (0) at (-1.5, 0) {};
		\node [style=none] (1) at (1.5, 0) {};
	\end{pgfonlayer}
	\begin{pgfonlayer}{edgelayer}
		\draw [style=ZZ W XX] (0.center) to (1.center);
	\end{pgfonlayer}
\end{tikzpicture}

        \quad\stackrel{(r_{elim})}\circeq\quad
        \input{./figures/matchability-preservation/mp-rewrites/id-RHS.tikz}
    \end{equation}
\end{proposition}
\begin{proof}
    \cite[Prop.~6.12]{fault-tolerance-by-construction} show that the rewrite is
 fault-equivalent. 
 As neither side has detecting regions, the provided Pauli web multisets
 trivially form a detecting region basis. 
 Furthermore, there exists an obvious boundary-respecting coupling between the Pauli webs. 
\end{proof}

\begin{proposition}[Spider-Unfuse]
 For any (CSS) matchable Pauli web multiset on either side of this rewrite, there
 exists a (CSS) matchable Pauli web multiset on the other side, such that this
 rewrite is matchability preserving:
    \begin{equation}
        \input{./figures/matchability-preservation/mp-rewrites/unfuse-LHS.tikz}
        \quad\stackrel{(r_{fuse})}\circeq\quad
        \input{./figures/matchability-preservation/mp-rewrites/unfuse-RHS.tikz}
    \end{equation}
\end{proposition}
\begin{proof}
    \cite[Prop.~6.13]{fault-tolerance-by-construction} show that the rewrite is
 fault-equivalent. 
 As neither side has detecting regions, any matchable Pauli web multiset will 
 trivially form a detecting region basis. 
 For any (CSS) matchable multiset of Pauli webs on the left diagram, we can construct a matchable multiset of Pauli webs on the right by, for each Pauli web, taking the same highlighting on the boundary edges, and if the boundary edges are highlighted in red, also highlighting the internal edge in red. 
 This must give valid Pauli webs on the right, along with an obvious boundary-respecting coupling. 
 As the original multiset was (CSS) matchable, the new multiset must also be (CSS) matchable.

 Similarly, for any (CSS) matchable multiset of Pauli webs on the right, we can construct a (CSS) matchable multiset of Pauli webs on the left by, for each Pauli web, dropping the action on the internal edge. 
 All Pauli webs in the new set must still be valid, as the internal edge could not have been highlighted in green, and if it was highlighted in red, then all other edges must also be highlighted in red. 
 Once again, this gives an obvious boundary-respecting coupling, and as the original multiset was (CSS) matchable, the new multiset must also be (CSS) matchable.
\end{proof}

\begin{proposition}[$r_4$]
    The following rewrite is CSS matchability preserving:
    \begin{equation}
        \input{./figures/matchability-preservation/mp-rewrites/r4-blossom-LHS.tikz} \quad \cup \quad 2 \times \input{./figures/matchability-preservation/mp-rewrites/r4-blossom-LHS-red.tikz} \qquad 
        \stackrel{(r_4)}\circeq\qquad
        \input{./figures/matchability-preservation/mp-rewrites/r4-blossom-RHS.tikz} \quad \cup \quad 2 \times \input{./figures/matchability-preservation/mp-rewrites/r4-blossom-RHS-red.tikz}
    \end{equation}
    \textbf{Remark:} 
    The rewrite above contains six Pauli webs with non-trivial boundary action
    (four green, two red).
    The union of diagrams refers to the union of the corresponding Pauli web
    multisets.
    Webs highlighting more than two edges of one spider cannot be drawn as a
    single line, so we instead highlight edges, to the usual presentation of
    Pauli webs, and use ``$2 \times$'' to indicate their multiplicity. 
\end{proposition}
\begin{proof}
 Rodatz et al. prove that $r_4$ is a fault-equivalent rewrite in
    \cite[Prop.~6.14]{fault-tolerance-by-construction}.
 The left diagram has no detecting regions, and the right diagram contains a Pauli web that generates the single detecting region present. 
 Even though not explicitly stated, there exists an obvious boundary-respecting coupling that maps each of the webs with non-trivial action on the boundary to their corresponding web on the other side.  
 So, the rewrite is a valid detector-aware rewrite. 
 Furthermore, it is matchable since every wire is covered by at most two Pauli webs of each colour.
\end{proof}

\section{Matchability-preserving circuit extraction}
\label{sec:mp:generality}

Given a phase-free ZX diagram, we can do fault-tolerant circuit extraction using the following two steps \cite{floquetifying-stabiliser-codes}\footnote{\cite{fault-tolerance-by-construction} allow idealised edges on ZX diagrams that are treated as fault-free. A general circuit-extraction procedure would furthermore have to consider how to unidealise edges. This is, however, as of yet, an open problem and therefore not further considered in this work. Instead, we focus on fault-equivalent circuit extraction from ZX diagrams without idealised edges.}: 
(1) decompose  all spiders into spiders of weight three, (2) use ($r_{elim}$) and ($r_{fuse}$) to extract a circuit.
As we have already shown that ($r_{elim}$) and ($r_{fuse}$) are matchability-preserving, it remains to be shown that we can decompose arbitrary weight spiders into spiders of weight three. 
While there is significant room for optimising such protocols, in this work we focus on showing that it is possible to carry out this procedure possible while preserving matchability. 

When considering arbitrary weight spiders in arbitrary, matchable contexts, we must define multiple matchability-preserving rewrites: one for each possible set of Pauli webs. 
However, we can show that if we have a matchability-preserving rewrite, dropping some of the Pauli webs results in a new matchability-preserving rewrite: 
\begin{proposition}[Submultisets]
	\label{prop:submultisets}
    Let $(D, \mathcal{S}) \underset{R}{\circeq} (E, \mathcal{T})$ be a matchability-preserving rewrite. 
    Then for any $\mathcal{S'} \subseteq \mathcal{S}$ such that $\mathcal{S'}$ forms a detector basis for $D$, $(D_1, \mathcal{S'}) \underset{R|_{\mathcal{S'}}}{\circeq} (E, \{R(S) | S \in \mathcal{S}|_B\} \cup \mathcal{T}|_D)$ is a matchability-preserving rewrite. 
\end{proposition}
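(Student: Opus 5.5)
The plan is to check the three ingredients of a detector-aware rewrite separately, and then matchability. Throughout we read $D_1$ as $D$, $\mathcal S|_B$ as $\mathcal S'|_S$ (the webs of $\mathcal S'$ with non-trivial boundary action), and write $\mathcal T' := \{R(S) \mid S \in \mathcal S'|_S\} \cup \mathcal T|_D$, taken as a multiset via the expanded sets.

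\textbf{Fault equivalence.} The underlying ZX rewrite $D \FaultEq E$ is unchanged, so it is still fault-equivalent. Nothing needs to be proved here.

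\textbf{Local detector bases.} On the left, $\mathcal S'$ forms a detector basis for $D$ by hypothesis. On the right, the detecting part of $\mathcal T'$ is $\mathcal T'|_D$. Every $R(S)$ has the same non-trivial boundary action as $S$, because $R$ is boundary-respecting. Hence none of these webs is detecting, and $\mathcal T'|_D = \mathcal T|_D$. Since $\mathcal T$ is a local detector basis for $E$, this is a basis for the detecting regions of $E$.

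\textbf{Boundary-respecting coupling.} Since $\mathcal S'\subseteq \mathcal S$, we have $(\mathcal S'|_S)^*\subseteq(\mathcal S|_S)^*$, so the restriction $R|_{\mathcal S'}$ is well defined. It is injective because $R$ is a bijection, and by construction its image is exactly $(\mathcal T'|_S)^*$. It therefore gives a bijection $(\mathcal S'|_S)^*\to(\mathcal T'|_S)^*$. Boundary-respecting is inherited pointwise from $R$.

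\textbf{Matchability.} This is the step where care is needed, although it is easy. Matchability is monotone under taking submultisets: if each edge is covered by at most two webs of $\mathcal S$ (or at most two per colour in the CSS case), the same holds for any submultiset. Now $\mathcal S'\subseteq\mathcal S$, and $\mathcal T'\subseteq\mathcal T$ as multisets, because $R$ is injective on expanded sets and so different copies map to different copies. So both sides stay (CSS) matchable, and the claim follows from \cref{def:mp-rewrite}.

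The only real obstacle is the bookkeeping with multisets and expanded sets. One point needs checking: the image of $R|_{\mathcal S'}$, read with multiplicities, must genuinely be a submultiset of $\mathcal T|_S$. If it were not, repeated webs could break the at-most-two counting bound.
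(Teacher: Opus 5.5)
Your proposal is correct and follows the same route as the paper's proof. Both arguments note that fault equivalence is inherited, that both sides are local detector bases, that $R|_{\mathcal S'}$ is a boundary-respecting coupling, and that submultisets of (CSS) matchable multisets stay matchable. You additionally spell out the multiset bookkeeping that the paper's brief proof leaves implicit, including the reading of $D_1$ as $D$ and of $\mathcal S|_B$ as $\mathcal S'|_S$; the latter reading is the one under which $R|_{\mathcal S'}$ is actually a bijection.
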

\begin{proof}
    As $(D, \mathcal{S}) \underset{R}{\circeq} (E, \mathcal{T})$, we must have $D \FaultEq E$. 
    Furthermore, by definition, $\mathcal{S'}$ and $\{R(S) | S \in \mathcal{S}|_B\} \cup \mathcal{T}|_D$ form detector bases of $D$ and $E$ respectively and $R|_{\mathcal{S'}}$ provides a boundary-respecting matching.  
    As $\mathcal{S}$ and $\mathcal{T}$ are matchable Pauli webs, any subsets of them must be matchable. 
\end{proof}
As a consequence of \cref{prop:submultisets}, if we can identify a pattern of detecting regions and show that nearly all possible Pauli web configurations are subsets of that pattern, we will have provided all necessary matchability-preserving rewrites.

First, we show: 
\begin{proposition}[$r_{4n}$ preserves matchability]
    \label{prop:floral-decompose}
    The following rewrites preserve CSS matchability:
    \begin{equation}
        \input{./figures/circuit-extraction/r4n-blossom-LHS.tikz}
        \hspace{4pt}\cup\hspace{4pt}
        2 \times \input{./figures/circuit-extraction/r4n-blossom-LHS-red.tikz} 
        \quad \stackrel{(r_{4n})}\circeq \quad 
        \input{./figures/circuit-extraction/r4n-blossom-RHS.tikz}
        \hspace{4pt}\cup\hspace{4pt}
        2 \times \input{./figures/circuit-extraction/r4n-blossom-RHS-red.tikz} 
    \end{equation}
    Here, the dots on the internal wires refer to bundles of $n$ wires with
    local detectors in between them.
\end{proposition}
\begin{proof}
    First, we need to show fault equivalence between the two diagrams. 
    The proof builds on the observation that for any undetectable, internal $X$ fault between two spiders to occur, all $n$ of the connecting edges have to have an $X$ fault. 
    Based on this, the proof is analogous to the $r_4$ fault equivalence proof in
    \cite[Prop.~A.4]{fault-tolerance-by-construction}.

    Next, we can observe that the Pauli web multisets form local detector bases on both sides, and that there exists an obvious boundary-respecting coupling between the multisets. 
    Therefore, this is a detector-aware rewrite. 
    Furthermore, as the Pauli web multisets highlight each edge at most twice in each colour, the rewrite is matchability-preserving. 
\end{proof}

While this matchability-preserving rewrite covers several cases, not all matchable detecting regions around a single spider will have the flower-shaped pattern above; 
all that we are guaranteed is that each Pauli web covers an even number of legs of the spider in question. 
If some Pauli web enters on edge $e_1$, it will leave on some edge $e_2$. 
There might be another Pauli web that enters on $e_2$ and leaves on $e_3$. 
However, the next Pauli web, entering on $e_3$, might leave on edge $e_1$ rather than $e_4$, breaking the flower-shaped pattern. 
To handle such cases, we provide a rewrite that decomposes large spiders with multiple such smaller flower patterns into individual spiders with flower patterns. 
For this rewrite to be valid, we require the \emph{distance} of the overall circuit to be sufficiently large. 
The circuit distance quantifies how good a quantum circuit is at detecting and correcting errors. 
It is defined as the minimum weight of all non-trivial, undetectable faults: if a circuit has distance $d$, all faults of weight less than $d$ must either be trivial or detectable. 

\begin{proposition}
    \label{rewrite:unfuse-floral}
    Let $C(D)$ have a circuit distance of more than $n$. Then, the following rewrite is matchability-preserving:
    \begin{equation}
        \input{./figures/circuit-extraction/double-detector-dotted-LHS.tikz} \quad \cup \quad 2 \times 
        \input{./figures/circuit-extraction/double-detector-dotted-LHS-red.tikz} \qquad 
        \circeq \qquad
        \input{./figures/circuit-extraction/double-detector-dotted-RHS.tikz} \quad \cup \quad 2 \times 
        \input{./figures/circuit-extraction/double-detector-dotted-RHS-red.tikz}
    \end{equation}
    where we have a flower-shaped pattern on the $n$ wires on the left and matchable Pauli webs on the right. 
\end{proposition}
\begin{proof}
    First, we show that the rewrite is fault-equivalent. 
    For this, we observe that we only have to worry about faults on the newly unfused edges. 
    A $Z$-type fault propagates to the boundary without increasing in weight and is therefore fine. 
    An $X$-type fault propagates to all the inputs, creating a fault of weight $n$. 
    This fault is undetectable as it commutes with all detecting regions. 
    As we previously assumed the circuit distance to be more than $n$, this means that, on the original diagram, this undetectable fault of weight $n$ must be trivial. 
    Therefore, an $X$-type fault on the unfused edge must be trivial. 

    As neither diagram has detectors, the Pauli web multisets trivially form a basis for the detecting regions. 
    Thus, the rewrite is a valid detector-aware rewrite. 
    Furthermore, as the Pauli web multisets cover each edge at most twice, the rewrite is matchability-preserving. 
\end{proof}

This rewrite results in a spider that has a flower-shaped pattern on the left spider, except for one edge not being highlighted by a green detecting region at all. 
By \cref{prop:appx:merging-stabilisers}, this means we can apply \cref{prop:floral-decompose} to decompose it further.  

Finally, we can state the main theorem of this paper: 
\begin{theorem}[CSS matchability-preserving circuit synthesis]
    Let $D$ be a CSS matchable, phase-free ZX diagram with circuit distance $d$ where all spiders have degree at less than $d$. 
    Then we can CSS matchability-preservingly extract a fault-equivalent quantum circuit from $D$.
\end{theorem}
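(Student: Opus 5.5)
The plan follows the two-step extraction procedure described at the start of this section: first decompose every spider of $D$ into spiders of weight at most three using only CSS matchability-preserving rewrites, then extract the circuit with $(r_{elim})$ and $(r_{fuse})$, which are already known to be CSS matchability preserving. By the proposition that matchability-preserving rewrites preserve matchability, applied one rewrite at a time, the final diagram carries a CSS matchable detector basis. By the compositionality and transitivity results of \cref{appendix:properties-of-da-rewrites}, the whole sequence is a single fault-equivalent detector-aware rewrite. The proof is an induction on the multiset of spider degrees: each step strictly lowers the degree of a large spider or replaces it with smaller ones.

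\textbf{Analysing one spider.} Fix a spider $v$ of degree $k\ge 4$, say green by colour symmetry. Restrict the CSS matchable basis $\mathcal S$ to $v$.
\begin{itemize}
\item \emph{Red webs.} Each red web covers either all legs of $v$ or none. By matchability, at most two red webs cover all of $v$, so the red part is always a submultiset of the ``$2\times$'' red pattern in \cref{prop:floral-decompose}.
\item \emph{Green webs.} Each green web covers an even number of legs, and each leg lies in at most two green webs. Build a multigraph on the legs of $v$ in which each green web passing through $v$ is an edge-set. Pairing up the legs a web covers, and using that each leg has green-degree at most two, this graph decomposes into paths and cycles: these are exactly the ``flower'' chains.
\end{itemize}
Webs that cover four or more legs of $v$ have to be handled using the merging statement \cref{prop:appx:merging-stabilisers}. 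The intended route is to replace them, within the span, by products of pairwise-adjacent webs without increasing edge coverage; I would verify this replacement carefully.

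\textbf{Decomposing a spider.} If the green pattern at $v$ is a single flower, it is a submultiset of the pattern in \cref{prop:floral-decompose}. \Cref{prop:submultisets} then gives a CSS matchability-preserving instance of $(r_{4n})$, or of $(r_4)$ when $n=1$, that splits $v$ into two spiders of smaller degree. Otherwise $v$ carries several disjoint flowers. In that case I apply \cref{rewrite:unfuse-floral} to separate one flower's legs, on $n<k$ wires, from the rest. This step needs the circuit distance to exceed $n$, which holds because $n<k<d$ by the hypothesis on degrees.

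\textbf{Main obstacle.} The resulting spiders are again of degree $<d$, and each is either flower-shaped or flower-shaped with one unhighlighted leg, which \cref{prop:appx:merging-stabilisers} reduces to the flower case. The difficult step is showing the distance hypothesis survives the induction. Since every rewrite used is fault-equivalent, the circuit distance of the rewritten diagram is still $d$, and all new spiders have degree at most the old ones, so the invariant ``all degrees $<d$'' is maintained. I also expect some work to confirm that every local Pauli web configuration of a degree-$k$ spider really is, after merging, a submultiset of one of the two provided patterns. This case analysis, together with the multiset/coupling bookkeeping under \cref{prop:submultisets}, is where I would concentrate effort. Once all spiders have degree at most three, circuit extraction with $(r_{elim})$ and $(r_{fuse})$ finishes the proof.
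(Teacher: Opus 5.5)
Your strategy is the paper's: separate flowers with \cref{rewrite:unfuse-floral}, with the hypothesis that all degrees are below $d$ supplying its distance condition, then decompose flowers with $(r_{4n})$ and $(r_4)$, and finish with $(r_{elim})$ and $(r_{fuse})$. Phrasing this as an induction on degrees instead of four global phases is only a reordering. There is, however, a concrete gap in your flower step. You assert that a single flower on the $k$ legs of $v$ is a submultiset of the pattern in \cref{prop:floral-decompose}. But $(r_{4n})$, and likewise $(r_4)$, exists only for spiders with $4n$ legs. So whenever $k$ is not a multiple of four, no instance of the rewrite matches and your induction has no move; a $5$-, $6$- or $7$-legged spider carrying one closed flower is an example.

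The paper handles exactly this case in its second step. First, $(r_{fuse})$ raises the degree of the spider to a multiple of four; each new internal edge carries no green web, and carries the red webs exactly when the other legs do. Then \cref{prop:appx:merging-stabilisers} merges the consecutive petals of the $(r_{4n})$ pattern that pass through the padding legs. The padded configuration thereby becomes a submultiset of a matchability-preserving instance of $(r_{4n})$. You need this padding-plus-merging step, not just the ``one unhighlighted leg'' merging you mention.

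Padding also breaks your invariant that degrees never increase, and a padded spider may have degree at least $d$. This is harmless, but you must say why. The distance hypothesis is needed only for \cref{rewrite:unfuse-floral}, which you apply before any padding, to spiders of degree below $d$; distance is preserved because every rewrite is fault-equivalent. Padding is applied only to spiders that already carry a single flower, and these go straight into $(r_{4n})$, which needs no distance assumption. Finally, $(r_{4n})$ should not be described as splitting $v$ into two smaller spiders. The paper uses it to bring all spiders down to weight at most four, and only then applies $(r_4)$ to reach weight three. Your termination argument should be phrased in those terms.
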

\begin{proof}
    All phase-free spiders can be decomposed as follows:
    \begin{enumerate}
        \item
            Use \cref{rewrite:unfuse-floral} rewrite to unfuse all spiders until
            their surrounding detecting regions form flower patterns.
        \item
            Use \cref{prop:floral-decompose} rewrite to decompose all spiders
            into spiders of weight at most four.
            If a spider has a degree that is not a multiple of four, we can use
            ($r_{fuse}$) to increase its degree.
        \item
            Use $r_4$ to decompose all spiders into spiders of weight three.
        \item
            Use ($r_{elim}$) and ($r_{fuse}$) to extract a circuit.
    \end{enumerate}
\end{proof}

\section{Worked example}
\label{sec:mp:example}
We demonstrate the methodology from the previous sections through an application to plaquette measurements in the rotated surface code.
More specifically, we use matchability-preserving rewrites to find a matchable, fault-equivalent implementation for the $Z^4$ stabiliser measurement, surrounded by plaquettes on all sides.
We remark that, in the context of the surface code, a single, bare ancilla measurement is fault-tolerant and matchability-preserving. However, alternative syndrome extraction procedures are sometimes necessary, for instance when compiling to different gate sets, as for example explored in e.g. \cite{Gidney2023, improved-pairwise-measurement-based-surface-code}.
Here, we will show a simple example of a matchability-preserving circuit extraction. 
In \cref{sec:appx:examples-continued}, we furthermore show that the construction of \cite{improved-pairwise-measurement-based-surface-code} is, in fact, matchability-preserving, contrary to what the paper states. 

Following \cite{floquetifying-stabiliser-codes,fault-tolerance-by-construction},
we represent the four-qubit Pauli measurement as a four-legged spider without a phase.
For the repeated measurement of a Z plaquette that is surrounded on all four sides by X plaquettes, we get the following Pauli web coverage.
Note that for the purpose of brevity, we focus on the red Pauli webs in this
section.
Tracking the green Pauli webs is straightforward in this example, and we provide
the full rewriting steps in \cref{sec:appx:examples-continued}.
\begin{equation}
    \input{./figures/example/x-web-1.tikz}
    \input{./figures/example/x-web-2.tikz}
    \input{./figures/example/x-web-3.tikz}
    \input{./figures/example/x-web-4.tikz}
\end{equation}

Focusing on the four-legged spider in the middle, we apply
matchability-preserving decompositions until we get a quantum circuit:
\begin{equation}
    \begin{gathered}
        \input{./figures/example/zx-style-1-x-coloured.tikz}
        \stackrel{(r_4)}= \input{./figures/example/zx-style-3-x-coloured.tikz}
        \stackrel{(r_{elim})}= \input{./figures/example/zx-style-4-x-coloured.tikz}
        \stackrel{(r_{fuse})}= \input{./figures/example/zx-style-5-x-coloured.tikz}\\
        \stackrel{(r_{fuse})}= \input{./figures/example/zx-style-6-x-coloured.tikz}
        \stackrel{(r_{elim})}= \input{./figures/example/zx-style-7-x-coloured.tikz}
        \stackrel{(r_{fuse})}= \input{./figures/example/zx-style-8-x-coloured.tikz}
    \end{gathered}
\end{equation}
Note that in the last rewrite of the $X$ detectors multiset, our notation becomes somewhat convoluted on the green spiders.
This is because the Pauli webs need to cover all three legs of these green spiders.
Reverting to standard Pauli web notation where we draw each web on an individual diagram, we get:
\begin{equation}
        \begin{array}{ccc}\scalebox{0.8}{
            \input{./figures/example/zx-style-9-x-web-1.tikz}}
                & \scalebox{0.8}{\input{./figures/example/zx-style-9-x-web-2.tikz}}
                & \multirow{2}{*}{\scalebox{0.8}{\input{./figures/example/zx-style-9-x-web-5.tikz}}}
            \\[50pt]
            \scalebox{0.8}{\input{./figures/example/zx-style-9-x-web-3.tikz}}
                & \scalebox{0.8}{\input{./figures/example/zx-style-9-x-web-4.tikz}}
        \end{array}
\end{equation}

Another example of matchability-preserving rewrites in action can be found in
\cref{sec:appx:examples-continued}.
In this example, we rederive an implementation of a weight-four syndrome measurement
for the surface code using non-destructive weight-two measurements, proposed by
\cite{improved-pairwise-measurement-based-surface-code}.
In their paper, the authors claim that their new circuit does not have a
matchable basis of detecting regions and, therefore, propose a new splitting-based decoder that requires syndrome-dependent reweighting of the edges in the decoding graph in order to ensure appropriate matchings. 
However, we show that their circuits can be derived using local,
matchability-preserving rewrites, thereby proving that a matchable basis must
exist and the authors could have used a regular splitting decoder.\footnote{We note that even with a (CSS) matchable basis, it is generally necessary to split some faults from the circuit noise model.}

\section{Conclusion}
\label{sec:conclusion}

In this work, we augmented ZX rewrites to track changes to the decoding problem
along with changes to ZX diagrams.
We applied these detector-aware rewrites to develop matchability-preserving
rewrites, which ensure that the ability to
apply efficient MWPM decoding in the resulting spacetime code is preserved.
Furthermore, we demonstrated a set of matchability-preserving rewrites that is
sufficient for synthesising MWPM-decodable, fault-tolerant implementations of phase-free ZX diagrams, including syndrome measurements in matchable codes. 

Future work includes generalising matchability-preserving circuit synthesis beyond the phase-free fragment of the ZX calculus. 
A first step would be to extend to the Clifford fragment, before extending to arbitrary ZX diagrams.
The latter requires generalising Pauli webs beyond Clifford ZX diagrams. 
Furthermore, alternative rewrite strategies could be explored, for example optimising for gate count or considering hardware constraints. 
In \cref{appendix:rewrite-variants}, we propose alternative matchability-preserving rewrites that could be helpful for this. 
Additionally, we would like to explore other desirable properties of the decoding problem that can be preserved under ZX rewrites.
For example, it would be interesting to consider rewrites that preserve the girth of the decoding graph, allowing for high-accuracy belief propagation decoding. 
Finally, one could consider rewrites that allow leveraging the original decoder to construct an efficient decoder for the rewritten diagram, including rewrites that keep the decoding problem intact up to isomorphisms (similar to \cite{fault-tolerant-transformations-of-spacetime-codes}).
This would constitute an alternative approach to preserving efficient decodability.

\section*{Acknowledgements}
We would like to thank Maximilian Rüsch, Alex Townsend-Teague and Peter-Jan Derks for useful discussions.
MS is supported by the German Academic Scholarship Foundation and employed at the Karlsruhe Institute of Technology.
LGS is supported through a Leverhulme-Peierls Fellowship at the University of Oxford, funded by grant no. LIP-2020-014. 
AK is supported by the Engineering and Physical Sciences Research Council grant number EP/Z002230/1, “(De)constructing quantum software (DeQS)”. 
BR thanks Simon Harrison for his generous support for the Wolfson Harrison UK Research Council Quantum Foundation Scholarship. 
BR is employed part-time by Quantinuum.
The wording in some sections of this paper has been refined using LLMs.

% \nocite{*}
\bibliographystyle{eptcs} % TODO: xor eptcsalpha, eptcsini, eptcsalphaini
\bibliography{references}
% TODO: follow referencing guidance from example.tex

\appendix
\crefalias{section}{appendix} % make cleverref label sections as appendices here

\section{Properties of Detector-Aware Rewrites}
\label{appendix:properties-of-da-rewrites}

\begin{proposition}[Detector-aware rewrites are compositional]
    \label{prop:da-compositional}
    Let $C(D)$ be a diagram with a Pauli web multiset $\mathcal{S}$. 
    Furthermore, let
    $(D, \mathcal{S}[D]) \underset{R}{\circeq} (E, \mathcal{T})$ be a
    detector-aware rewrite. 
    Then we can construct a detector-aware rewrite
    $(C(D), \mathcal{S}) \underset{R'}{\circeq} (C(E), \mathcal{S}')$ such that
    $\mathcal S'[C] = \mathcal S[C]$ and $\mathcal S'[E] = \mathcal T$.
\end{proposition}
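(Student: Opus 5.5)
The plan is to take $\mathcal S'$ to be exactly the multiset produced by \cref{thm:applying-da-rewrites} and check the three ingredients of a detector-aware rewrite for $(C(D),\mathcal S)\circeq(C(E),\mathcal S')$. These are fault equivalence of $C(D)$ and $C(E)$, that $\mathcal S$ and $\mathcal S'$ are local detector bases, and a boundary-respecting coupling $R'$. Fault equivalence is inherited from $D\FaultEq E$, because fault equivalence is preserved under plugging into a context, as shown in \cite{fault-tolerance-by-construction}. The two stated properties $\mathcal S'[C]=\mathcal S[C]$ and $\mathcal S'[E]=\mathcal T$ should be read off the three bullets of the construction.

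\textbf{Step 1 (well-definedness of $\mathcal S'$).} For each new web, check the Pauli-web condition at every spider. Spiders inside $C$ see the same edge colouring as before, since webs are copied on $C$. Spiders inside $E$ see a web from $\mathcal T$. The only delicate edges are the boundary edges of $E$, which are shared between $C$ and $E$. In the second bullet, $S$ is trivial on $B_D=B_E$, so it glues with the trivial web on $E$. In the third bullet, boundary-respecting gives $R(S[D])[B_E]=S[B_D]$, so the two halves agree on the shared edges. In the first bullet, a detecting region of $E$ is trivial on $B_E$, so it glues with the trivial web on $C$. This step also gives $\mathcal S'[E]=\mathcal T$: the first and third bullets produce exactly $\mathcal T|_D$ and, via the bijection $R$, all of $\mathcal T|_S$ counted with multiplicity. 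Webs from the second bullet restrict trivially to $E$ and are discarded. Similarly, $\mathcal S'[C]=\mathcal S[C]$ holds because webs trivial on $C$ are discarded, and every other web of $\mathcal S$ appears exactly once with its $C$-part unchanged.

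\textbf{Step 2 (coupling).} Define $R'$ by sending each stabilising web $S\in\mathcal S|_S$ (with multiplicity index) to the web of $\mathcal S'$ it produced under the second or third bullet. This is a bijection onto $\mathcal S'|_S$: webs of the first bullet are detecting regions of $C(E)$, and the boundary of $C(E)$ lies inside $C$ (or inside $E$ only via $B_E$-type outer legs, handled by $R$). It is boundary-respecting because the action on the outer boundary is copied from $S$ on $C$, or given by $R(S[D])$ on any outer legs of $E$, which matches $S[D]$ there.

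\textbf{Step 3 (detector bases), the main obstacle.} We must show that $\mathcal S'|_D$ is a basis of the detecting regions of $C(E)$, given that $\mathcal S|_D$ is one for $C(D)$ and $\mathcal T|_D$ is one for $E$. I would argue via a dimension count plus spanning.

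For spanning, take a detecting region $P$ of $C(E)$. Its restriction $P[E]$ is a Pauli web of $E$. Express its boundary action in terms of $\mathcal T|_S$, which should span the relevant stabilisers because the coupling is a bijection with the boundary-spanning set $\mathcal S[D]|_S$. Pull that combination back through $R^{-1}$ to get a web of $C(D)$ that agrees with $P$ on $C$. This web is a detecting region of $C(D)$, so it is a combination of $\mathcal S|_D$. Pushing that combination forward, the result differs from $P$ only by a local detecting region of $E$, which lies in the span of $\mathcal T|_D$.

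Linear independence follows from the symmetric argument, using that $E\circeq D$ is itself a detector-aware rewrite with coupling $R^{-1}$.

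The difficulty I expect is justifying that $\mathcal S[D]|_S$ spans every boundary action a web of $C(D)$ can induce on $D$. This may need the fact that fault equivalence preserves the stabiliser group, and hence the count of detecting regions. I would try first to prove that the two numbers of detecting regions are equal, by comparing $\dim$ of web spaces minus boundary-stabiliser dimensions on both sides, and then only prove spanning.
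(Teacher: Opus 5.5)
Your route is the paper's: you take $\mathcal S'$ from the application definition, get fault equivalence by compositionality, read $R'$ off from $\mathcal S'[C]=\mathcal S[C]$, prove spanning by pulling a detecting region back to $C(D)$, expanding in $\mathcal S|_D$ and pushing forward, and then prove independence. Your spanning step is more careful than the paper's, which silently assumes the $E$-part of $P$ can be traded for $D$-coverage ``following $R$''. The lemma you flag is the right one, and it makes the detour through $\mathcal T|_S$ unnecessary. Since $D$ and $E$ implement the same map and Pauli-web boundary actions are its stabilisers, there is a web $Q$ on $D$ with $Q[B_D]=P[B_E]$. Gluing $Q$ to $P$ on $C$ gives a detecting region of $C(D)$. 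But this does \emph{not} preserve the number of detecting regions ($r_4$ adds one), so your dimension count fails as stated. Write $\mathrm{Det}(\cdot)$ for the space of detecting regions and $\ell_D,\ell_E$ for the numbers of local detecting regions of $D,E$. The lemma says restriction to $C$ has the same image on both sides, so the correct invariant is $\dim\mathrm{Det}(C(D))-\ell_D=\dim\mathrm{Det}(C(E))-\ell_E$.

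The genuine gap is independence. Running the argument backwards via $R^{-1}$ gives spanning of the reverse-constructed multiset and says nothing about redundancy in $\mathcal S'$. In fact independence fails for this $\mathcal S'$ in general, because the second bullet discards the $D$-part of a web that is trivial on $B_D$ but non-trivial both on $C$ and inside $D$. Take $A$ a detecting region supported in $C$ away from $D$, $L$ the only local detecting region of $D$, and $\mathcal S=\{AL,A\}$ a basis (say these span $\mathrm{Det}(C(D))$). Then $\mathcal S[D]=\{L\}$ is a legitimate local detector basis, yet $\mathcal S'$ contains $A$ twice. If $D,E$ are the right- and left-hand sides of $r_4$, then $\mathrm{Det}(C(E))$ is spanned by $A$ alone while $\mathcal S'[C]=\mathcal S[C]=\{A,A\}$ is demanded. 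So no valid $\mathcal S'$ exists, and the statement itself needs an extra hypothesis. The paper's independence argument has the same blind spot: it ``reproduces'' a local detector from $\mathcal S[D]|_D$, whose elements are restrictions, not members of $\mathcal S$. A workable hypothesis is that every web of $\mathcal S$ trivial on $B_D$ is supported entirely in $C$ or entirely in $D$. Let $S_i'$ be the webs of $\mathcal S'$ produced from $S_i\in\mathcal S|_D$. If a product of some $S_i'$ and elements of $\mathcal T|_D$ is trivial, restricting to $C$ shows $\prod_i S_i$ is a local detector of $D$. By the hypothesis it is then a product of members of $\mathcal S$ supported in $D$, which contradicts independence of $\mathcal S|_D$ unless no $S_i$ occurs. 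Independence of $\mathcal T|_D$ then finishes. Equivalently, $\#\,\mathcal S'|_D=\#\,\mathcal S|_D-\ell_D+\ell_E=\dim\mathrm{Det}(C(E))$, which together with spanning gives a basis.
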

\begin{proof}
    As $(D, \mathcal{S}[D]) \underset{R}{\circeq} (E, \mathcal{T})$ is a
    detector-aware rewrite, we know that $D \FaultEq E$.
    Therefore we have $C(D) \FaultEq C(E)$ by compositionality of fault
    equivalence.

    We can construct $\mathcal{S}'$ as given by \Cref{thm:applying-da-rewrites}
    and then $\mathcal S'[C] = \mathcal S[C]$ and $\mathcal S'[E] = \mathcal T$
    hold by construction.
    Furthermore, $\mathcal S'[C] = \mathcal S[C]$ implies that there exists a
    boundary-respecting coupling $R'$ between $\mathcal S$ and $\mathcal S'$.

    It remains to be shown that $\mathcal S'$ forms a local detector basis.
    To show that $\mathcal S'|_D$ contains a generating set, take any detecting
    Pauli web $P$ for $C(E)$ and transform it to a Pauli web $P'$ on $C(D)$,
    replacing the $E$ Pauli web coverage with $D$ Pauli web coverage such that
    $R(D) = E$.
    As $\mathcal S$ forms a local detector basis, $P'$ is generated by
    $P' = P_1' \cdots P_n'$ where $P_i' \in \mathcal S|_D$.
    Translating each $P_i'$ to $C(E)$ again by replacing $D$ coverage with $E$
    coverage following $R$, we obtain a Pauli web $P_1 \cdots P_n$ for $C(E)$
    that is equal to $P$ up to local detectors of $E$.
    This difference $P\cdot P_1 \cdots P_n$ can be generated by $\mathcal T|_D$, so $P$ is
    generated by $\mathcal S'|_D$.

    We can similarly show that $\mathcal S'|_D$ is independent by assuming a
    non-trivial set of detectors $P_1, \dots, P_n \in \mathcal S'|_D$ that
    compose to the trivial Pauli web (i.e. $P_1 \cdots P_n = Id$).
    Translating these from $C(E)$ to $C(D)$ following $R$ yields Pauli webs
    $P_1', \dots, P_n'$ that compose to a local detector of $C(D)$.
    We can reproduce this local detector from
    $P_{n+1} \cdots P_{m} \in \mathcal S[D]|_D$ so that
    $P_1' \cdots P_nP_{n+1} \cdots P_m = Id$.
    This contradicts the independence of $\mathcal S|_D$, so our assumption must
    be false and $\mathcal S'|_D$ is in fact independent. 
\end{proof}

\begin{proposition}[Detector-aware rewrites are transitive]
    Let $(D, \mathcal{S}) \underset{R}{\circeq} (E, \mathcal{T})$ and $(E, \mathcal{T}) \underset{R'}{\circeq} (F, \mathcal{V})$ be detector-aware rewrites. 
    Then $(D, \mathcal{S}) \underset{R' \circ R}{\circeq} (F, \mathcal{V})$ is a detector-aware rewrite.
\end{proposition}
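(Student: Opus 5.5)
The plan is to check, one by one, the three components that make up a detector-aware rewrite. These are fault equivalence of $D$ and $F$, the two local detector bases, and a boundary-respecting coupling between the stabilising parts. The first two should follow almost immediately from the hypotheses, and all the real work will be in the coupling.

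\textbf{Fault equivalence.} From $(D, \mathcal{S}) \underset{R}{\circeq} (E, \mathcal{T})$ we get $D \FaultEq E$, and from $(E, \mathcal{T}) \underset{R'}{\circeq} (F, \mathcal{V})$ we get $E \FaultEq F$. Fault equivalence is transitive (it is an equivalence relation in \cite{fault-tolerance-by-construction}), so $D \FaultEq F$.

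\textbf{Local detector bases.} The first rewrite gives that $\mathcal{S}$ is a local detector basis for $D$. The second rewrite gives that $\mathcal{V}$ is a local detector basis for $F$. No further argument is needed: the definition of a detector-aware rewrite constrains only the two endpoint multisets, and these are the endpoints of the composite.

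\textbf{The coupling.} This is the only step requiring care, because of the multiset bookkeeping.
\begin{itemize}
\item By definition, $R$ is a bijection $(\mathcal{S}|_S)^* \to (\mathcal{T}|_S)^*$.
\item Likewise, $R'$ is a bijection $(\mathcal{T}|_S)^* \to (\mathcal{V}|_S)^*$.
\item The codomain of $R$ is exactly the domain of $R'$: both are the expanded set of the same multiset $\mathcal{T}|_S$. So $R' \circ R$ is well defined.
\item As a composite of bijections, $R' \circ R$ is a bijection $(\mathcal{S}|_S)^* \to (\mathcal{V}|_S)^*$.
\end{itemize}

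For boundary-respecting, I would first fix an identification of the boundary edges $B_D$, $B_E$, $B_F$. Such identifications exist because equivalent diagrams have the same type, i.e.\@ the same boundary wires. Then, for any $S \in \mathcal{S}|_S$ (together with its copy index),
\begin{equation*}
S[B_D] = R(S)[B_E] = R'(R(S))[B_F].
\end{equation*}
The first equality is the boundary-respecting property of $R$. The second is the boundary-respecting property of $R'$, applied to $R(S) \in \mathcal{T}|_S$.

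The one subtlety I expect is the implicit identification of boundary edges across the three diagrams. I would make it explicit by fixing the canonical correspondence of boundary wires induced by the rewrites and observing that these correspondences compose. Once that is fixed, the argument closes, and no appeal to \cref{prop:da-compositional} is needed.
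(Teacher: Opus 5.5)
Your proposal is correct and follows the paper's proof: fault equivalence composes by transitivity, the endpoint multisets $\mathcal{S}$ and $\mathcal{V}$ are already local detector bases, and a composite of boundary-respecting couplings is boundary-respecting. Your explicit check that $R' \circ R$ is a well-defined bijection on the expanded sets, and your remark on identifying boundary edges, fill in details the paper leaves implicit.
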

\begin{proof}
    As $(D, \mathcal{S}) \underset{R}{\circeq} (E, \mathcal{T})$ and $(E, \mathcal{T}) \underset{R'}{\circeq} (F, \mathcal{V})$ are detector-aware rewrites, we have $D \FaultEq E$ and $E \FaultEq F$. 
    By transitivity of fault equivalence, this means that $D \FaultEq F$. 
    Also, as they are detector-aware rewrites, we know that $\mathcal{S}$ and $\mathcal{V}$ must form a basis of detecting regions in their respective diagrams. 
    Finally, the composition of two boundary-respecting couplings must also be a boundary-respecting coupling. 
    Therefore, $(D, \mathcal{S}) \underset{R' \circ R}{\circeq} (F, \mathcal{V})$ is a detector-aware rewrite. 
\end{proof}

\Cref{thm:applying-da-rewrites} tells us how to apply detector-aware rewrites to larger diagrams and how to construct the resulting Pauli web multiset from the previous one. 
However, our motivation for detector-aware rewrites was not to rewrite Pauli web \emph{multisets} but detector bases.
And indeed, detector bases are always mapped to detector bases by detector-aware rewrites.
\begin{corollary}[Detector-aware rewrites keep detector bases intact]
    \label{thm:da-keeps-det-basis-intact}
    When applying a detector-aware rewrite
    $(D, \mathcal{S}[D]) \underset{R}{\circeq} (E, \mathcal{T})$ to a diagram
    $C(D)$ with a detector basis $\mathcal S$, the resulting Pauli web multiset
    $\mathcal S'$ for $C(E)$ is a detector basis of $C(E)$.
\end{corollary}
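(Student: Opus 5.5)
This follows directly from \cref{prop:da-compositional}, once we match its hypotheses to the present setting. The plan is to show that the multiset $\mathcal S'$ produced by \cref{thm:applying-da-rewrites} is the same multiset that \cref{prop:da-compositional} constructs. That proposition already establishes that this multiset is a local detector basis of $C(E)$, meaning $\mathcal S'|_D$ is a basis of the detecting regions. We then argue that, because $C(E)$ is the whole diagram, a local detector basis restricted to detecting webs is exactly a detector basis in the global sense.

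\textbf{Step 1: a detector basis is a local detector basis.} Let $\mathcal S$ be a detector basis of $C(D)$. We regard it as a local detector basis of $C(D)$ by noting that $\mathcal S|_D=\mathcal S$. If $C(D)$ has open boundary, the webs in $\mathcal S$ are all detecting, so $\mathcal S$ is trivially a basis of its own detecting part. The rewrite is given with $\mathcal S[D]$ as the local detector basis of $D$, which is exactly the hypothesis of \cref{prop:da-compositional}.

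\textbf{Step 2: apply \cref{prop:da-compositional}.} This yields a detector-aware rewrite $(C(D),\mathcal S)\underset{R'}{\circeq}(C(E),\mathcal S')$ in which $\mathcal S'$ is the multiset of \cref{thm:applying-da-rewrites}. By the definition of a detector-aware rewrite, $\mathcal S'$ is a local detector basis of $C(E)$, so $\mathcal S'|_D$ spans the detecting regions of $C(E)$ and is independent.

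\textbf{Step 3: no stabilising webs survive.} It remains to check that $\mathcal S'$ contains no stabilising webs, so that $\mathcal S'=\mathcal S'|_D$. Since every web in $\mathcal S$ acts trivially on the boundary of $C(D)$, the boundary-respecting coupling $R'$ is a bijection between empty sets. Hence the boundary of $C(E)$ is also acted on trivially by every web in $\mathcal S'$. Concretely, each of the three clauses of \cref{thm:applying-da-rewrites} yields a web that acts on $C$ exactly as some web of $\mathcal S$ does, or trivially on $C$, so every such web is trivial on the outer boundary.

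\textbf{Main obstacle.} The one point needing care is whether the webs produced are nontrivial. In the third clause, a web $S$ might act trivially on $C$ while $R(S[D])$ is nontrivial; that is harmless. The real concern is whether a web could collapse to the identity, which would break independence. This is exactly what the independence argument of \cref{prop:da-compositional} excludes, so I would simply cite it rather than reprove it.
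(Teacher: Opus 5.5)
Your proposal is correct and follows the paper's own proof. The paper also invokes \cref{prop:da-compositional} to conclude that $\mathcal S'$ is a local detector basis of $C(E)$. It then uses the boundary-respecting coupling from the stabiliser-free $\mathcal S$ to force $\mathcal S'|_S=\emptyset$, so that $\mathcal S'=\mathcal S'|_D$. Your Step~1 and your remark on nontriviality only make explicit hypotheses that the paper leaves implicit.
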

\begin{proof}
    \Cref{prop:da-compositional} tells us that applying the detector-aware
    rewrite to $C(D)$ with $\mathcal S$ yields a Pauli web multiset
    $\mathcal S'$ for $C(E)$ such that
    $(C(D), \mathcal S) \circeq (C(E), \mathcal S')$ is a detector-aware rewrite
    with $\mathcal S[C] = \mathcal S'[C]$.
    Therefore $\mathcal S'$ must form a local detector basis.
    But as there exists a boundary-respecting coupling between $\mathcal S$ and
    $\mathcal S'$, all elements of $\mathcal S'$ must be local detectors, and
    $\mathcal S' = \mathcal S'|_D$ is a detector basis of $C(E)$.
\end{proof}

\section{Other matchability-preserving rewrites in the surface code}
\label{sec:appx:examples-continued}
In our derivation for a matchable, fault-tolerant implementation of the $Z^4$
plaquette measurement in the rotated surface code in \cref{sec:mp:example}, we
omitted all green Pauli webs for brevity.
Including all Pauli webs, the Pauli web coverage of the $Z^4$ measurement is as
follows:
\begin{equation}
    \label{eq:plaquette-measurement-pauli-webs}
    \begin{array}{ccccc}
        \input{./figures/example/z-web-1.tikz}
            & \input{./figures/example/z-web-3.tikz}
            & \input{./figures/example/z-web-4.tikz}
            & \input{./figures/example/x-web-1.tikz}
            & \input{./figures/example/x-web-2.tikz}\\[48pt]
        \input{./figures/example/z-web-2.tikz}
            & \input{./figures/example/z-web-5.tikz}
            & \input{./figures/example/z-web-6.tikz}
            & \input{./figures/example/x-web-3.tikz}
            & \input{./figures/example/x-web-4.tikz}\\
        \multicolumn{1}{c}{$\upbracefill$}
            & \multicolumn{2}{c}{$\upbracefill$}
            & \multicolumn{2}{c}{$\upbracefill$}\\[4pt]
        \multicolumn{1}{c}{\text{$Z^4$ detectors}}
            & \multicolumn{2}{c}{\text{Adjacent $Z$ detectors}}
            & \multicolumn{2}{c}{\text{Adjacent $X$ detectors}}
    \end{array}
\end{equation}

We apply matchability-preserving rewrites to decompose the four-legged spider:
\begin{equation}
    \begin{aligned}
        2\times \input{./figures/example/zx-style-1-z-coloured.tikz} \quad&\cup\quad \input{./figures/example/zx-style-1-x-coloured.tikz}\\
        \stackrel{(r_4)}\circeq 2\times \input{./figures/example/zx-style-3-z-coloured.tikz} \quad&\cup\quad \input{./figures/example/zx-style-3-x-coloured.tikz}\\
        \stackrel{(r_{elim})}\circeq 2\times \input{./figures/example/zx-style-4-z-coloured.tikz} \quad&\cup\quad \input{./figures/example/zx-style-4-x-coloured.tikz}\\
        \stackrel{(r_{fuse})}\circeq 2\times \input{./figures/example/zx-style-5-z-coloured.tikz} \quad&\cup\quad \input{./figures/example/zx-style-5-x-coloured.tikz}\\
        \stackrel{(r_{elim})}\circeq 2\times \input{./figures/example/zx-style-6-z-coloured.tikz} \quad&\cup\quad \input{./figures/example/zx-style-6-x-coloured.tikz}\\
        \stackrel{(r_{elim})}\circeq 2\times \input{./figures/example/zx-style-7-z-coloured.tikz} \quad&\cup\quad \input{./figures/example/zx-style-7-x-coloured.tikz}\\
        \stackrel{(r_{fuse})}\circeq 2\times \input{./figures/example/zx-style-8-z-coloured.tikz} \quad&\cup\quad \input{./figures/example/zx-style-8-x-coloured.tikz}
    \end{aligned}
\end{equation}

Applying this rewrite to the full diagram leaves us with the following implementation:
\begin{equation}
    \label{eq:plaquette-measurement-zx-style-impl}
    \begin{array}{ccc}
        \scalebox{0.8}{\input{./figures/example/zx-style-9-z-web-3.tikz}}
            & \scalebox{0.8}{\input{./figures/example/zx-style-9-z-web-4.tikz}}
            & \scalebox{0.8}{\input{./figures/example/zx-style-9-z-web-1.tikz}}\\[50pt]
        \scalebox{0.8}{\input{./figures/example/zx-style-9-z-web-5.tikz}}
            & \scalebox{0.8}{\input{./figures/example/zx-style-9-z-web-6.tikz}}
            & \scalebox{0.8}{\input{./figures/example/zx-style-9-z-web-2.tikz}}\\[50pt]
        \scalebox{0.8}{\input{./figures/example/zx-style-9-x-web-1.tikz}}
            & \scalebox{0.8}{\input{./figures/example/zx-style-9-x-web-2.tikz}}
            & \multirow{2}{*}{\scalebox{0.8}{\input{./figures/example/zx-style-9-x-web-5.tikz}}}\\[50pt]
        \scalebox{0.8}{\input{./figures/example/zx-style-9-x-web-3.tikz}}
            & \scalebox{0.8}{\input{./figures/example/zx-style-9-x-web-4.tikz}}
    \end{array}
\end{equation}

In \cref{sec:mp:example}, we also pointed out that our framework allows us to
construct a matchable detector basis for Grans-Samuelsson et al.'s improved
plaquette measurement
implementation~\cite{improved-pairwise-measurement-based-surface-code}.
We derive this detector basis similar to the above construction by applying
matchability-preserving rewrites to the four-legged spider covered by Pauli webs
as visualised in~\cref{eq:plaquette-measurement-pauli-webs}.

\begin{equation}
    \begin{aligned}
        2 \times \input{./figures/example/gs-style-01-z-coloured.tikz} &\quad\cup\quad \input{./figures/example/gs-style-01-x-coloured.tikz}\\
        \stackrel{(r_{fuse})}= 2 \times \input{./figures/example/gs-style-02-z-coloured.tikz} &\quad\cup\quad \input{./figures/example/gs-style-02-x-coloured.tikz}\\
        \stackrel{(r_6)}= 2 \times \input{./figures/example/gs-style-03-z-coloured.tikz} &\quad\cup\quad \input{./figures/example/gs-style-03-x-coloured.tikz}\\
        \stackrel{(r_{fuse})}= 2 \times \input{./figures/example/gs-style-04-z-coloured.tikz} &\quad\cup\quad \input{./figures/example/gs-style-04-x-coloured.tikz}\\
        \stackrel{(r_4)}= 2 \times \input{./figures/example/gs-style-05-z-coloured.tikz} &\quad\cup\quad \input{./figures/example/gs-style-05-x-coloured.tikz}\\
        \stackrel{(r_{elim})}= 2 \times \input{./figures/example/gs-style-06-z-coloured.tikz} &\quad\cup\quad \input{./figures/example/gs-style-06-x-coloured.tikz}\\
        \stackrel{(r_{fuse})}= 2 \times \input{./figures/example/gs-style-07-z-coloured.tikz} &\quad\cup\quad \input{./figures/example/gs-style-07-x-coloured.tikz}
    \end{aligned}
\end{equation}

Assuming that the measurement is repeated, we can ``roll over'' the three state
initialisations on the left of the diagram to the right of the diagram.
A formal treatment for such changes can be found in
\cite[Sec.~6]{floquetifying-stabiliser-codes}.
Note that in the diagrams below, we have drawn some of the red Pauli webs
separately since they cover parts which are only connected through the
surrounding diagram that is omitted here.

\begin{equation}
    \label{eq:gs-plaq-meas-first-rollover}
    \begin{gathered}
        2 \times \input{./figures/example/gs-style-08-z-coloured.tikz} \cup \input{./figures/example/gs-style-08-x-coloured.tikz}\\
        \cup \input{./figures/example/gs-style-08-x-coloured-alt1.tikz} \cup \input{./figures/example/gs-style-08-x-coloured-alt2.tikz}
    \end{gathered}
\end{equation}

The rollover allows us to merge the pairs of destructive measurement and
initialisation into repeated non-destructive measurements.
We state the corresponding matchability-preserving rewrite in the following
proposition.

\begin{proposition}[$r_{meas}$ preserves matchability]
    The following rewrite preserves CSS matchability.
    \begin{equation}
        \begin{tikzpicture}
	\begin{pgfonlayer}{nodelayer}
		\node [style=Z dot] (0) at (-0.5, 0) {};
		\node [style=Z dot] (1) at (0.5, 0) {};
		\node [style=none] (2) at (-1.5, 0) {};
		\node [style=none] (3) at (1.5, 0) {};
	\end{pgfonlayer}
	\begin{pgfonlayer}{edgelayer}
		\draw [style=IX W II] (2.center) to (0);
		\draw [style=IX W II] (1) to (3.center);
	\end{pgfonlayer}
\end{tikzpicture}
 \cup \begin{tikzpicture}
	\begin{pgfonlayer}{nodelayer}
		\node [style=Z dot] (0) at (-0.5, 0) {};
		\node [style=Z dot] (1) at (0.5, 0) {};
		\node [style=none] (2) at (-1.5, 0) {};
		\node [style=none] (3) at (1.5, 0) {};
	\end{pgfonlayer}
	\begin{pgfonlayer}{edgelayer}
		\draw [style=X Web] (2.center) to (0);
		\draw [style=X Web] (1) to (3.center);
	\end{pgfonlayer}
\end{tikzpicture}
 \circeq \input{./figures/example/meas-RHS-x-coloured.tikz}
    \end{equation}
\end{proposition}
\begin{proof}
    The given ZX rewrite was proven to be fault-equivalent
    in~\cite[Thm.~6.2]{floquetifying-stabiliser-codes}.
    It is detector-aware since both the LHS and RHS Pauli webs multisets form
    local detector bases, and there exists and obvious boundary-respecting
    coupling.
    We observe that both Pauli web multisets are CSS matchable and conclude that
    the rewrite preserves CSS matchability.
\end{proof}
We can generalise this to a triple measurement through repeated application:
\begin{equation}
    \begin{gathered}
        \input{./figures/example/triple-meas-1.tikz}\\
        \stackrel{(r_{meas})}\circeq \input{./figures/example/triple-meas-2-multi.tikz} \cup \input{./figures/example/triple-meas-2-single.tikz}\\
        \stackrel{(r_{elim})}\circeq \input{./figures/example/triple-meas-3-multi.tikz} \cup \input{./figures/example/triple-meas-3-single.tikz}\\
        \stackrel{(r_{fuse})}\circeq \input{./figures/example/triple-meas-4-multi.tikz} \cup \input{./figures/example/triple-meas-4-single.tikz}\\
        \stackrel{(r_{meas})}\circeq \input{./figures/example/triple-meas-5-multi.tikz} \cup \input{./figures/example/triple-meas-5-single.tikz}\\
        \circeq \begin{tikzpicture}
	\begin{pgfonlayer}{nodelayer}
		\node [style=Z dot] (0) at (-0.5, 0) {};
		\node [style=none] (2) at (-1.5, 0) {};
		\node [style=Z dot] (11) at (0.5, 0) {};
		\node [style=none] (12) at (1.5, 0) {};
	\end{pgfonlayer}
	\begin{pgfonlayer}{edgelayer}
		\draw [style=IX W II] (2.center) to (0);
		\draw [style=IX W II] (11) to (12.center);
	\end{pgfonlayer}
\end{tikzpicture}
 \cup \begin{tikzpicture}
	\begin{pgfonlayer}{nodelayer}
		\node [style=Z dot] (0) at (-0.5, 0) {};
		\node [style=none] (2) at (-1.5, 0) {};
		\node [style=Z dot] (11) at (0.5, 0) {};
		\node [style=none] (12) at (1.5, 0) {};
	\end{pgfonlayer}
	\begin{pgfonlayer}{edgelayer}
		\draw [style=X Web] (12.center) to (11);
		\draw [style=X Web] (0) to (2.center);
	\end{pgfonlayer}
\end{tikzpicture}

    \end{gathered}
\end{equation}

Applying these rewrites to the diagram from
\cref{eq:gs-plaq-meas-first-rollover} produces the following diagram.
\begin{equation}
    \input{./figures/example/gs-style-09-z-coloured.tikz} \cup \input{./figures/example/gs-style-09-z-coloured-alt.tikz} \cup \input{./figures/example/gs-style-09-x-coloured.tikz}
\end{equation}

Rolling over the diagram in the reverse direction gives us the implementation
from \cite{improved-pairwise-measurement-based-surface-code}.
\begin{equation}
    \input{./figures/example/gs-style-10-z-coloured.tikz} \cup \input{./figures/example/gs-style-10-z-coloured-alt.tikz} \cup \input{./figures/example/gs-style-10-x-coloured.tikz}
\end{equation}
We can obtain a full matchable basis by embedding the above diagrams and Pauli
webs into their context, similar to
\cref{eq:plaquette-measurement-zx-style-impl}.

\section{Rewrite variants}
\label{appendix:rewrite-variants}
We showed in \cref{sec:mp,sec:mp:generality} that the $4n$-legged spiders with a
flower-shaped Pauli web coverage can be decomposed with the
matchability-preserving $r_{4n}$ rewrite.
In this section, we present two similar rewrites, $r_{5n}$ and $r_{6n}$
which can be used to decompose $5n$ and $6n$-legged spiders under flower-shaped
Pauli web coverage, respectively.
While these rewrites are not strictly required for our compilation procedure to
be complete, picking these alternative rewrites can sometimes reduce the
resulting diagram size.
We leave a detailed comparison between these rewrites for future work. 

\begin{proposition}[$r_{5n}$ preserves matchability]
    The following family of rewrites preserves CSS matchability:
    \begin{equation}
        \begin{aligned}
            \input{./figures/circuit-extraction/r5n-blossom-LHS.tikz}
            \hspace{4pt}&\cup\hspace{4pt}
            2 \times \input{./figures/circuit-extraction/r5n-blossom-LHS-red.tikz}
            \\
            \stackrel{(r_{5n})}\circeq \quad
            \input{./figures/circuit-extraction/r5n-blossom-RHS.tikz}
            \hspace{4pt}&\cup\hspace{4pt}
            2 \times \input{./figures/circuit-extraction/r5n-blossom-RHS-red.tikz}
        \end{aligned}
    \end{equation}
    Here, the dots on the internal wires refer to bundles of $n$ wires with
    local detectors in between them.
\end{proposition}
\begin{proof}
    We prove this result in analogy to our proof for $r_{4n}$.
    For $n = 1$, the underlying ZX rewrite is the rewrite $r_5$ which was proven
    to be fault-equivalent in \cite[Thm~3.8]{floquetifying-stabiliser-codes}.
    Then $r_{5n}$ is the generalisation of $r_5$ where each of the wires is
    replaced by a bundle of $n$ wires.
    In analogy to our proof for $r_{4n}$, we argue that any undetectable error
    must cover either all or none of the wires in the wire bundles, and these
    bundled errors behave the same way as their simple counterparts in $r_5$.
    $r_{5n}$ is therefore a fault-equivalent rewrite.

    Furthermore, the rewrite along with the provided Pauli web multisets is
    clearly a detector-aware rewrite as each side of the equation forms a local
    detector basis, and there exists an obvious boundary-respecting coupling
    between them.
    Observing that both sides' Pauli web multisets are CSS matchable, we
    conclude that $r_{5n}$ is a matchability-preserving rewrite. 
\end{proof}

\begin{proposition}[$r_{6n}$ preserves matchability]
    The following family of rewrites preserves CSS matchability:
    \begin{equation}
        \begin{aligned}
            \input{./figures/circuit-extraction/r6n-blossom-LHS.tikz}
            \hspace{4pt}&\cup\hspace{4pt}
            2 \times \input{./figures/circuit-extraction/r6n-blossom-LHS-red.tikz}
            \\
            \stackrel{(r_{6n})}\circeq \quad
            \input{./figures/circuit-extraction/r6n-blossom-RHS.tikz}
            \hspace{4pt}&\cup\hspace{4pt}
            2 \times \input{./figures/circuit-extraction/r6n-blossom-RHS-red.tikz}
        \end{aligned}
    \end{equation}
    Here, the dots on the internal wires refer to bundles of $n$ wires with
    local detectors in between them.
\end{proposition}
\begin{proof}
    We prove the claim in analogy to $r_{4n}$ and $r_{5n}$.
    The underlying ZX rewrites for $n=1$ is as follows.
    \begin{equation}
        \input{./figures/circuit-extraction/r6-LHS.tikz}
        \quad = \quad
        \input{./figures/circuit-extraction/r6-RHS.tikz}
    \end{equation}
    We note that any $Z$ edge flip can be pushed out to either of the boundary
    wires.
    An undetectable fault must $X$-flip (or $Y$-flip) either none or exactly two
    of the internal wires because otherwise, the triangular local detector would
    be violated. 
    Any two such $X$ flips must be adjacent to a common $Z$ spider, so we can
    push them through that spider to obtain two $X$ flips on the boundary wires.
    \begin{equation}
        \input{./figures/circuit-extraction/r6-pushout-LHS.tikz}
        \quad = \quad
        \input{./figures/circuit-extraction/r6-pushout-RHS.tikz}
    \end{equation}
    This way, we can push all undetectable faults in the LHS and RHS of $r_6$
    to the boundary of the diagrams without increasing their weight, so by
    \cite[Prop.~A.4]{fault-tolerance-by-construction}, $r_6$ must be
    fault-equivalent.
    
    Similar to $r_{4n}$ and $r_{5n}$, we can generalise this fault equivalence
    to $r_{6n}$.
    We conclude the proof for $r_{6n}$ preserving matchability by observing that
    both the LHS and RHS Pauli web multisets in the claim each form local
    detector bases, they are in correspondence through the obvious
    boundary-respecting coupling and they are CSS matchable.
\end{proof}

\begin{proposition}
    \label{prop:appx:merging-stabilisers}
    Let $(D_1, \mathcal S_1) \circeq (D_2, \mathcal S_2)$ be a
    matchability-preserving rewrite, let $B_i$ be the boundary edges of $D_i$
    and let $P_i,P_i' \in \mathcal S_i$ with $P_i \neq P_i'$ (for $i = 1, 2$)
    be Pauli webs with non-trivial action on the boundary such that
    $P_1[B_1] = P_2[B_2] \neq P_1'[B_1] = P_2'[B_2]$.

    Replacing $P_i, P_i'$ with $P_iP_i'$ in $\mathcal S_i$ (for $i = 1, 2$)
    yields another matchability-preserving rewrite
    $(D_1, \mathcal S_1') \circeq (D_2, \mathcal S_2')$.
\end{proposition}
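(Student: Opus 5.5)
We check the three ingredients of a matchability-preserving rewrite: fault equivalence, that the new multisets are local detector bases related by a boundary-respecting coupling, and that the new multisets are matchable.

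\emph{Fault equivalence.} The underlying ZX rewrite $D_1 \FaultEq D_2$ is unchanged, so nothing needs to be shown.

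\emph{Detector bases.} Both $P_i$ and $P_i'$ act non-trivially on the boundary, so neither lies in $\mathcal S_i|_D$. Their product $P_iP_i'$ acts on $B_i$ as $P_i[B_i]\cdot P_i'[B_i]$. This is non-trivial because $P_i[B_i] \neq P_i'[B_i]$, so $P_iP_i'$ lies in $\mathcal S_i'|_S$. Hence $\mathcal S_i'|_D = \mathcal S_i|_D$, and the local detector basis property carries over verbatim. If the product were a detector, we would be in the degenerate case where the argument must change; the hypothesis $P_1[B_1] \neq P_1'[B_1]$ rules this out.

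\emph{Boundary-respecting coupling.} Let $R$ be the original coupling. We define $R'$ by $R'(P_1P_1') = P_2P_2'$. On the remaining elements, $R'$ agrees with $R$ after re-pairing. Concretely, if $R$ did not send $P_1 \mapsto P_2$ and $P_1' \mapsto P_2'$, we first observe that elements with equal boundary action are interchangeable. So we may modify $R$ by swapping images among webs with identical boundary action until $R(P_1)=P_2$ and $R(P_1')=P_2'$; the hypothesis $P_1[B_1]=P_2[B_2]$ and $P_1'[B_1]=P_2'[B_2]$ guarantees such a choice exists. Boundary respect then holds because $(P_1P_1')[B_1] = P_1[B_1]P_1'[B_1] = P_2[B_2]P_2'[B_2] = (P_2P_2')[B_2]$. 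Removing two elements from each side and adding one to each side keeps $R'$ a bijection.

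\emph{Matchability.} This is the step I expect to be the main obstacle. The product web covers an edge $e$ non-trivially only if $P_i$ or $P_i'$ does, since $(P_iP_i')(e) \neq I$ implies $P_i(e)\neq I$ or $P_i'(e)\neq I$. So replacing the two webs by their product never increases the number of webs covering $e$: the count of webs covering $e$ in $\mathcal S_i'$ is at most that in $\mathcal S_i$. Matchability is therefore preserved.

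For CSS matchability we must also ensure $P_iP_i'$ stays single-coloured. The danger is a green $P_i$ multiplied with a red $P_i'$. I would handle this by restricting to, or noting that the intended application uses, webs of the same colour. In that case the product is again single-coloured, and the per-colour count at each edge is bounded by the original count, as above.
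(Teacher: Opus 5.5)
Your proof is correct and follows the same route as the paper's: replace $P_i,P_i'$ by $P_iP_i'$, check $(P_1P_1')[B_1]=(P_2P_2')[B_2]$ for the coupling, and use that $P_iP_i'$ is non-trivial on an edge only if $P_i$ or $P_i'$ is, so no edge's coverage count increases. The paper handles the removal step by citing \cref{prop:submultisets}, whereas you check it directly, and in doing so you make explicit two points the paper leaves implicit: that $P_i[B_i]\neq P_i'[B_i]$ makes $P_iP_i'$ a stabilising web (so $\mathcal S_i'|_D=\mathcal S_i|_D$), and that the coupling can be re-paired to send $P_1\mapsto P_2$ and $P_1'\mapsto P_2'$.
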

\begin{proof}
    Following \cref{prop:submultisets}, removing $P_1, P_1'$ from $\mathcal S_1$
    yields a matchability-preserving submultiset-rewrite
    $(D_1, \tilde{\mathcal S_1}) \circeq (D_2, \tilde{\mathcal S_2})$ where
    $P_2, P_2'$ are removed from $\mathcal S_2$ too.
    Then for $\mathcal S_i' := \tilde{\mathcal S_i} \cup P_iP_i'$ (for
    $i = 1, 2$), $(D_1, \mathcal S_1') \circeq (D_2, \mathcal S_2)$ is a
    detector-aware rewrite too since $P_1P_1'[B_1] = P_2P_2'[B_2]$.
    We observe that $P_iP_i'$ can have a nontrivial action on at most as many
    edges as $P_i$ and $P_i'$ combined (for $i = 1, 2$), so we conclude that
    $(D_1, \mathcal S_1') \circeq (D_2, \mathcal S_2)$ is a
    matchability-preserving rewrite.
\end{proof}
As an important special case, \cref{prop:appx:merging-stabilisers} allows us to
combine two adjacent Pauli webs in a flower-shaped Pauli web coverage to ensure
a single wire is covered trivially by the Pauli web multiset.
This allows us to transform $r_{4n}$, $r_{5n}$ and $r_{6n}$ to decompose spiders
resulting from \cref{rewrite:unfuse-floral}.

\end{document}